\title[Asymptotic construction of LRC]{Constructive asymptotic bounds of locally repairable codes via function fields}
\author{Liming Ma}\address{School of Mathematical Sciences, Yangzhou University, Yangzhou, China 225002}\email{lmma@yzu.edu.cn}
\author{Chaoping Xing} \address{School of Electronics, Information and Electric Engineering, Shanghai Jiao Tong University, 
China 200240}\email{xingcp@sjtu.edu.cn}
\date{}
\newtheorem{lemma}{Lemma}[section]
\newtheorem{theorem}[lemma]{Theorem}
\newtheorem{cor}[lemma]{Corollary}
\newtheorem{prop}[lemma]{Proposition}
\theoremstyle{remark}
\newtheorem{rmk}{Remark}
\renewcommand{\epsilon}{\varepsilon}
\renewcommand{\le}{\leqslant}
\renewcommand{\ge}{\geqslant}
\newcommand{\vnote}[1]{}
\def\F{\mathbb{F}}
\def \mL {\mathcal{L}}
\def\Pin{{P_{\infty}}}
\def \Xi {{X^{[i]}}}
\newcommand{\Ga}{\alpha}
\newcommand{\Gb}{\beta}
\newcommand{\Gl}{\lambda}
\def \ba {{\bf a}}
\def \bc {{\bf c}}
\def \bi {{\bf 1}}
\def \bx {{\bf x}}
\def\bh {{\bf h}}
\def \bo {{\bf 0}}
\def \g{\mathfrak{g}}
\def\LRC {{\rm locally repairable code\ }}
\def\LRCs {{\rm locally repairable codes\ }}
\begin{document}

\maketitle

\begin{abstract} 
Locally repairable codes have been investigated extensively in recent years due to practical applications in distributed and cloud storage systems. However, there are few asymptotical constructions of locally repairable codes in the literature. In this paper, we provide an explicit asymptotic construction of locally repairable codes over arbitrary finite fields from local expansions of functions at a rational place. This construction gives a Tsfasman-Vladut-Zink type bound for locally repairable codes. Its main advantage is that there are no constraints on both locality and alphabet size. Furthermore, we show that the Gilbert-Varshamov type bound on locally repairable codes over non-prime finite fields can be exceeded for sufficiently large alphabet size. 
\end{abstract}

\section{Introduction}
Because of practical applications in distributed and cloud storage systems, locally repairable codes have been studied by many researchers \cite{FY14, GHSY12, GXY19, J19, LXY19, PD14, PKLK12, SRKV13, TB14, TPD16, XY18}. 
A code is said with locality $r$ if  every erasure of a given codeword can be recovered by accessing at most $r$ other symbols of this codeword.  
Unlike in the classical coding case, only a few papers study the asymptotical behavior of locally repairable codes  \cite{BTV17, CM15, LMX19, TBF16}. 
The main purpose of this paper is to present a new explicit construction of asymptotically good locally repairable codes via function fields. 

\subsection{Locally repairable codes and some bounds}\label{subsec:2.1}

Let $q$ be a prime power and let $\F_q$ be the finite field with $q$ elements. Let $C\subseteq \F_q^n$ be a $q$-ary block code of length $n$. For each $\Ga\in\F_q$ and $i\in \{1,2,\cdots, n\}$, define $C(i,\Ga):=\{\bc=(c_1,\dots,c_n)\in C\; : \; c_i=\Ga\}$. For a subset $I\subseteq \{1,2,\cdots, n\}\setminus \{i\}$, we denote by $C_{I}(i,\Ga)$ the projection of $C(i,\Ga)$ on $I$. Then $C$ is called a locally repairable code with locality $r$ if, for every $i\in \{1,2,\cdots, n\}$, there exists a subset
$I_i\subseteq \{1,2,\cdots, n\}\setminus \{i\}$ with $|I_i|\le r$ such that  $C_{I_i}(i,\Ga)$ and $C_{I_i}(i,\Gb)$ are disjoint for any $\Ga\neq \Gb\in \F_q$.

A linear \LRC over $\F_q$ of length $n$, dimension $k$, minimum distance $d$ and locality $r$ is denoted to be a $q$-ary $[n,k,d]$-linear code with locality $r$.
It is proved in \cite{GHSY12} that an $[n,k,d]$-linear code with locality $r$ satisfies the Singleton type bound
\begin{equation}\label{Singletonbound}
d\le n-k-\left\lceil \frac{k}{r} \right\rceil+2.
\end{equation}
 A code achieving the bound (\ref{Singletonbound}) is usually called an optimal locally repairable code. 
There are many different techniques to construct optimal locally repairable codes. 
One powerful method among them is to construct optimal locally repairable codes from automorphism groups of function fields \cite{BHHMV17, JMX17, LMX19, TB14}. 

In this paper, we mainly focus on the asymptotical behavior of locally repairable codes. 
The locality $r$ is fixed, but the dimension and minimum distance are proportional to the length $n$.
Let $R_q(r,\delta)$ denote the asymptotic bound on the rate of $q$-ary locally repairable codes with locality $r$ and relative minimum distance $\delta$, i.e., $$R_q(r,\delta)=\limsup_{n\rightarrow \infty} \frac{\log_qM_q(n,\lfloor\delta n\rfloor,r)}{n} ,$$
where $M_q(n,d,r)$ is the maximum size of  \LRCs of length $n$, minimum distance $d$ and locality $r$.

There are various asymptotically upper bounds on locally repairable codes. 
The Singleton type bound \eqref{Singletonbound} gives
\begin{equation}\label{S_bound}
R_q(r,\delta)\le  \frac{r}{r+1}(1-\delta) \mbox{ for } 0\le \delta\le 1.
\end{equation}
The asymptotic Plotkin type bound is given by
\begin{equation}\label{P_bound}
R_q(r,\delta)\le  \frac{r}{r+1}\left(1-\frac{q}{q-1}\cdot \delta\right) \text{ for } 0\le \delta\le 1-q^{-1}. 
\end{equation}
The following bound \eqref{LP_bound} is derived from the linear programming bound given in \cite{Aa79}
\begin{equation}\label{LP_bound}
R_q(r,\delta)\le \min_{0\le \tau \le \frac{1}{r+1}} \left\{\tau r+(1-\tau (r+1))f_q\left(\frac{\delta}{1-\tau(r+1)}\right)\right\},
\end{equation}
where $f_q(x):=H_q\left(\frac{1}{q}\big[ q-1-x(q-2)-2\sqrt{(q-1)x(1-x)}\big]\right)$ and $H_q(x)$ is the $q$-ary entropy function defined by
$$H_q(x):=x\log_q(q-1)-x\log_q(x)-(1-x)\log_q(1-x).$$

For $0\le \delta \le 1-q^{-1}$, the asymptotic Gilbert-Varshamov bound of locally repairable codes is given by 
\begin{equation}\label{asymp_GV_bound}
R_q(r,\delta)\ge 1-\min_{0<s\le 1} \left\{\frac{1}{r+1} \log_q\Big([1+(q-1)s]^{r+1}+(q-1)(1-s)^{r+1}\Big)-\delta \log_qs\right\}
\end{equation}
in \cite{TBF16}.

\subsection{Known results}
Although there are several asymptotically upper bounds and  the asymptotic Gilbert-Varshamov bound on locally repairable codes,  there is little work on asymptotical lower bounds  that are constructive.

For the classical codes, it is well known that  the Tsfasman-Vladut-Zink bound can improve upon the Gilbert-Varshamov bound in a certain interval for any square prime power $q\ge 49$ (see \cite{TVZ82} or \cite[Theorem 8.4.7]{St09}). In order to construct asymptotically good locally repairable codes, the algebraic geometry codes should be a good candidate. 

Using asymptotically optimal Garcia-Stichtenoth tower of function fields,
Barg {\it et al.} \cite{BTV17} gave a construction of  asymptotically good $q$-ary locally repairable codes with locality $r$ whose rate $R$ and relative distance $\delta$ satisfy
\begin{equation}\label{construction_l}
R \ge\frac{r}{r+1}\Big{(}1-\delta-\frac{3}{\sqrt{q}+1}\Big{)},\quad r=\sqrt{q}-1,
\end{equation}
and 
\begin{equation}\label{construction_l+1}
R \ge \frac{r}{r+1}\Big{(}1-\delta-\frac{\sqrt{q}+r}{q-1}\Big{)},\quad (r+1)|(\sqrt{q}+1).
\end{equation}
It was further shown in \cite{BTV17} that for some values $r$ and $q$, the bound \eqref{construction_l+1} exceeds the asymptotic Gilbert-Varshamov bound \eqref{asymp_GV_bound} on locally repairable codes.

Li {\it et al.} \cite{LMX17} generalized the idea given in \cite{BTV17} by considering more subgroups of automorphism groups of function fields in the Garcia-Stichtenoth tower. 
This construction allows more flexibility of locality. If $r+1=up^v$ with $u|(p^v-1, \sqrt{q}-1)$, then there exists a family of explicit $q$-ary linear locally repairable codes with locality $r$ whose rate $R$ and relative distance $\delta$ satisfy
\begin{equation}\label{eq:lmx17}R\ge \frac{r}{1+r}\Big{(}1-\delta-\frac{\sqrt{q}+r-1}{q-\sqrt{q}}\Big{)}.\end{equation}

There are two shortcomings for the above bounds obtained from automorphism groups of function fields in the Garcia-Stichtenoth tower. 
The first shortcoming is that the alphabet size $q$ must be a square of prime power, and the second one is the restriction on the locality $r$, i.e., $r+1$ must be a divisor of $\sqrt{q}+1$ or $\sqrt{q}(\sqrt{q}-1)$.
In this paper, we will overcome these shortcomings by using local expansions of functions at a rational place.

\subsection{Our results and comparison}
In this paper, we provide a new asymptotic construction of locally repairable codes via function fields. The underlying idea is based on the technique of local expansions to construct algebraic geometry codes which was initiated in \cite{XNL99}.  The difficulty is how to endow algebraic geometry codes with the additional structure of locality. 
Such locally repairable codes are obtained from parity-check matrices whose columns are formed by coefficients of local expansions of carefully chosen functions at a rational place.
Our main results of this paper are summarized below. 

\begin{theorem}\label{thm: 1.1}
Let $q$ be a prime power and let $A(q)$ be the Ihara's constant. 
Then there exists a family of $q$-ary linear locally repairable codes with locality $r$ whose rate $R$ and relative distance 
$\delta$ satisfy
\begin{equation}\label{eq:main1}
R\ge \frac{r}{r+1}-\frac{1}{A(q)} \frac{r}{r+1}-\delta.
\end{equation}
\end{theorem}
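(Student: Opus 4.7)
The plan is to imitate the Tsfasman-Vladut-Zink construction at a function field attaining Ihara's constant, but to build the locality condition directly into a parity-check matrix whose columns are local-expansion coefficients at one rational place. Let $\{F_s/\F_q\}$ be a sequence of function fields with $N(F_s)/g(F_s)\to A(q)$. Fix $F=F_s$ of large genus $g$, distinguish a rational place $P_\infty$ with uniformizer $t$, and partition the remaining $n$ rational places into $N_0=n/(r+1)$ blocks $B_1,\dots,B_{N_0}$ of size $r+1$. For each coordinate $j\in\{1,\dots,n\}$ I would pick a function $\phi_j\in\mathcal{L}(aP_\infty)$ (with $a$ to be tuned), and within each block I would choose the $r+1$ functions $\{\phi_j:j\in B_i\}$ so that one prescribed nontrivial $\F_q$-linear combination of them vanishes identically in $F$; this gives the weight-$(r+1)$ parity-check row that produces locality $r$.

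The parity-check matrix $H$ of the code is then formed by stacking a ``global'' block of size $s\times n$, whose $(i,j)$-entry is the coefficient $c_{-a+i}(\phi_j)$ of $t^{-a+i}$ in the Laurent expansion of $\phi_j$ at $P_\infty$, together with a block-diagonal ``local'' block of size $N_0\times n$ recording the within-block dependencies. The dimension bound $\dim C\ge n-s-N_0$ yields $R\ge \frac{r}{r+1}-\frac{s}{n}$. A nonzero codeword $c$ of weight $w$ produces a nonzero $\Phi=\sum_j c_j\phi_j\in\mathcal{L}(aP_\infty)$ whose first $s$ Laurent coefficients all vanish, so $\Phi\in\mathcal{L}((a-s)P_\infty)$. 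A Riemann-Roch bound on the dimension of such combinations supported on a size-$w$ subset will force $w\ge n-(a-s)-g+1$. Setting $a-s\approx (1-\delta)n+g$ and taking $g/n\to 1/A(q)$ along $\{F_s\}$ gives $R+\delta\ge \frac{r}{r+1}-\frac{1}{A(q)}\cdot\frac{r}{r+1}$, which is the claimed bound.

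The hard part will be the simultaneous realization of the within-block dependencies (for locality) and of a global block of full row rank with the correct distance behavior (for the rate-distance trade-off). Each block's required relation is a codimension-one condition in $\mathcal{L}(aP_\infty)^{r+1}$, and a Riemann-Roch count on divisors of the form $aP_\infty-\sum_{j\in S}P_j$ has to be used to verify that enough freedom survives to choose $\phi_j$'s satisfying both types of constraint when $a$ is moderately larger than $g$. This compatibility is precisely where the local-expansion approach of Xing-Niederreiter-Lam is essential: the global constraints (coefficients of the Laurent expansion) and the locality constraints (linear relations within a block) both live in the local ring at $P_\infty$, so they can be tuned in concert rather than as competing conditions on evaluations at different places.
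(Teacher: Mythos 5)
Your overall strategy is the paper's strategy (a parity-check matrix whose ``global'' rows are local-expansion coefficients at $\Pin$ and whose ``local'' rows are block-supported rows of weight $r+1$, fed into a tower attaining $A(q)$), but the distance argument as you set it up does not work, and the missing ingredient is exactly the point you defer to ``the hard part.'' You take every coordinate function $\phi_j$ from $\mL(a\Pin)$, i.e.\ with poles only at $\Pin$, and you even impose a nontrivial linear relation among the $r+1$ functions of each block. Two problems follow. First, for a nonzero $c$ in the kernel of $H$ the function $\Phi=\sum_j c_j\phi_j$ need not be nonzero: the prescribed block relations (and, as soon as $n$ exceeds $\dim\mL(a\Pin)=a-\g+1$ by more than the number of blocks, further unavoidable relations) give nonzero vectors with $\Phi=0$, which satisfy all global rows automatically and are blocked, if at all, only by a single local row per block. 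Second, even when $\Phi\neq 0$, the step ``$\Phi\in\mL((a-s)\Pin)$ hence $w\ge n-(a-s)-\g+1$'' is the minimum-distance bound of an \emph{evaluation} code; your code is a kernel code whose coordinates are the coefficients $c_j$, not the values $\Phi(P_j)$, so nothing links the Hamming weight of $c$ to the zero or pole divisor of $\Phi$. The proposed ``Riemann--Roch count on divisors $a\Pin-\sum_{j\in S}P_j$'' is precisely what is left unproved, and with all $\phi_j$ poleless outside $\Pin$ there is no such count to be had.

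The paper supplies the link by giving each coordinate its own pole: it takes $f_{ij}\in\mL((2\g-1)\Pin+P_{ij})$ with $\nu_{P_{ij}}(f_{ij})=-1$, normalized against a basis of $\mL((2\g-1)\Pin)$ (Lemma \ref{lem:rank}) so that the expansions start only at coefficient index $\g$, which also keeps the global block to $\g+t$ rows instead of $2\g+t$. Then a dependency among any $t$ columns produces a function with $\nu_{\Pin}\ge 1+t$ lying in $\mL\big(-(1+t)\Pin+\sum P_{ij}\big)$, a space of a divisor of negative degree, so the function is zero; and since each $f_{ij}$ has a simple pole at its own place $P_{ij}$, the zero function forces all coefficients to vanish. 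Two further repairs you would need even after adopting this: locality requires no functional relation at all (any full-support weight-$(r+1)$ row on a block, e.g.\ the all-ones row, is a parity check), and the claimed coefficient $\frac{r}{r+1}\cdot\frac{1}{A(q)}$ requires each block of $r+1$ coordinates to consume only $r$ rational places --- the paper achieves this by setting $f_{i,r+1}=\alpha_i f_{i1}$ with $\alpha_i\neq 1$ and using the all-ones row to disentangle the duplicated column, whereas your one-place-per-coordinate partition gives $\g/n\to 1/A(q)$ and hence only $R\ge \frac{r}{r+1}-\frac{1}{A(q)}-\delta$, weaker than the stated bound.
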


If $q=p^n$ is a prime power with $n\ge 2$, then there exist explicit towers of function fields which obtain a good lower bound of $A(q)$ \cite{BBGS15, GS95, GS96}. Hence, this construction  is explicit for locally repairable codes over non-prime finite fields. In particular, we have the following results.

\begin{cor}\label{cor: 1.2}
\begin{itemize}
\item[(i)] If $q$ is a square, then there exists an explicit family of $q$-ary linear locally repairable codes with locality $r$ whose rate $R$ and relative distance $\delta$ satisfy
\begin{equation}\label{eq:cor1.2}
R\ge \frac{r}{r+1}-\frac{1}{\sqrt{q}-1} \frac{r}{r+1}-\delta.
\end{equation}
\item[(ii)]  If $q$ is an odd power of prime, i.e., $q=p^{2m+1}$ with $m\ge 1$,  then there exists an explicit family of $q$-ary linear locally repairable codes with locality $r$ whose rate $R$ and relative distance $\delta$ satisfy
\begin{equation}\label{eq:cor1.2_2}
R\ge \frac{r}{r+1}-\frac{1}{2}\left(\frac{1}{p^m-1}+\frac{1}{p^{m+1}-1}\right) \frac{r}{r+1}-\delta.
\end{equation}
\end{itemize}
\end{cor}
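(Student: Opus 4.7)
The plan is to specialize Theorem \ref{thm: 1.1} to the two regimes in which strong and explicit lower bounds on Ihara's constant $A(q)$ are known, and then verify that the codes produced by Theorem \ref{thm: 1.1} remain explicit when the underlying function field tower is explicit.

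For part (i), when $q$ is a square, I would invoke the Tsfasman-Vladut-Zink theorem, realized by the explicit Garcia-Stichtenoth towers in \cite{GS95, GS96}, which attain the Drinfeld-Vladut upper bound, giving $A(q) = \sqrt{q} - 1$. Substituting $\tfrac{1}{A(q)} = \tfrac{1}{\sqrt{q}-1}$ directly into \eqref{eq:main1} yields \eqref{eq:cor1.2}.

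For part (ii), when $q = p^{2m+1}$ with $m \ge 1$, I would invoke the Bassa-Beelen-Garcia-Stichtenoth tower from \cite{BBGS15}, which furnishes the explicit lower bound
$$A(q) \;\ge\; \frac{2(p^m-1)(p^{m+1}-1)}{(p^m-1)+(p^{m+1}-1)},$$
equivalently $\tfrac{1}{A(q)} \le \tfrac{1}{2}\bigl(\tfrac{1}{p^m-1} + \tfrac{1}{p^{m+1}-1}\bigr)$. Feeding this into \eqref{eq:main1} gives \eqref{eq:cor1.2_2}.

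The only point that is not an immediate substitution is the word \emph{explicit}: one has to check that the construction underlying Theorem \ref{thm: 1.1}, which builds parity-check columns from local expansions of carefully chosen functions at a rational place, can be carried out effectively on the towers above. This reduces to having an explicitly described rational place, an explicit local parameter there, and an explicit basis of the relevant Riemann-Roch space at each step of the tower. Both the Garcia-Stichtenoth and Bassa-Beelen-Garcia-Stichtenoth towers are defined by recursive equations that supply all of these objects by design, so the explicitness is inherited and the corollary follows. The hard part, if any, will be bookkeeping in the BBGS case where the recursion is more intricate than in the classical Garcia-Stichtenoth case, but no new ideas beyond Theorem \ref{thm: 1.1} are required.
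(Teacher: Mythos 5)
Your proposal is correct and follows essentially the same route as the paper: both parts are direct specializations of Theorem \ref{thm: 1.1}, using $A(q)=\sqrt{q}-1$ via the Garcia--Stichtenoth tower for square $q$ and the Bassa--Beelen--Garcia--Stichtenoth bound for $q=p^{2m+1}$, and your harmonic-mean form $A(q)\ge \frac{2(p^m-1)(p^{m+1}-1)}{(p^m-1)+(p^{m+1}-1)}$ is algebraically identical to the paper's expression $\frac{2(p^{m+1}-1)}{p+1+\zeta}$ with $\zeta=\frac{p-1}{p^m-1}$ (you even make explicit the small computation the paper leaves implicit). Your remarks on explicitness of the towers are consistent with, though more detailed than, what the paper says.
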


\begin{proof}
If $q$ is a square, then the Garcia-Stichtenoth tower is the well-known explicit tower of function fields such that $A(q)=\sqrt{q}-1$ from \cite{GS95}. Thus, the item (i) follows immediately from Theorem \ref{thm: 1.1}. 
If $q$ is an odd power of prime, then there is an explicit tower of function fields such that
$$A(p^{2m+1})\ge \frac{2(p^{m+1}-1)}{p+1+\zeta} \text{ with } \zeta=\frac{p-1}{p^m-1}$$
from \cite{BBGS15}. Thus, the item (ii) follows immediately from Theorem \ref{thm: 1.1}. 
\end{proof}

\begin{figure}
\centering
\includegraphics[width=3.5in]{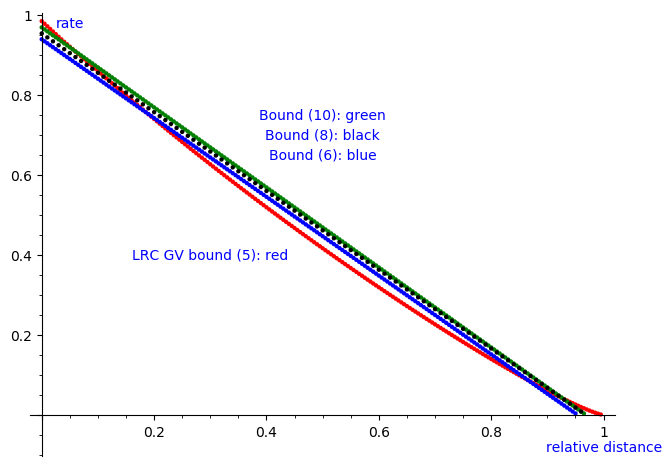}
\caption{$r=63, q=2^{12}$}\label{fig:1}
\end{figure}

\begin{figure}
\centering
\includegraphics[width=3.5in]{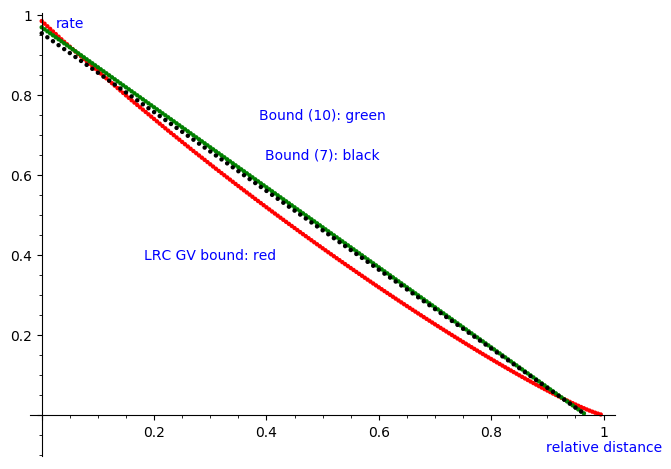}
\caption{$r=64, q=2^{12}$}\label{fig:2}
\end{figure}

The main advantage of this construction is to allow arbitrary locality. 
The bound (\ref{eq:cor1.2}) given in Corollary \ref{cor: 1.2} is better than the bound (\ref{construction_l+1}) given in \cite{BTV17} if and only if $$\delta< \frac{(r-1)r}{q-1}.$$
The bound (\ref{eq:cor1.2}) given in Corollary \ref{cor: 1.2} is better than the bound (\ref{eq:lmx17}) given in \cite{LMX17} if and only if $$\delta< \frac{(r-1)r}{q-\sqrt{q}}.$$
Hence, the bound  (\ref{eq:cor1.2})  is better than the bound  (\ref{construction_l+1}) or (\ref{eq:lmx17}) if $r>\sqrt{q}$. 

The figures  \ref{fig:1} and \ref{fig:2} show that the bound (\ref{eq:cor1.2}) given in Corollary \ref{cor: 1.2}  can exceed the asymptotic Gilbert-Varshamov bound (\ref{asymp_GV_bound}) on locally repairable codes, the bound (\ref{construction_l}) or (\ref{construction_l+1}) given in \cite{BTV17}  and the bound (\ref{eq:lmx17}) given in \cite{LMX17} for $r=63,q=2^{12}$ and $r=64,q=2^{12}$, respectively. The figure \ref{fig:3} shows that the bound (\ref{eq:cor1.2_2}) given in Corollary \ref{cor: 1.2}  can exceed the asymptotic Gilbert-Varshamov bound (\ref{asymp_GV_bound}) of locally repairable codes for $r=64,q=2^{13}$ as well. 

\begin{figure}\centering
\includegraphics[width=3.5in]{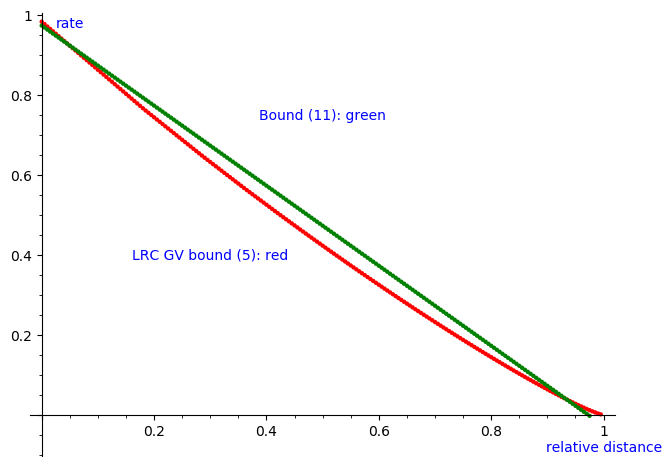}
\caption{$r=64, q=2^{13}$}\label{fig:3}
\end{figure}

\begin{figure}\centering
\includegraphics[width=3.5in]{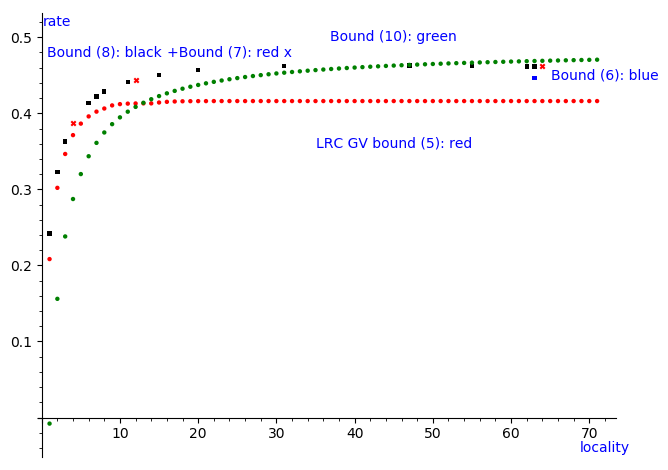}
\caption{$\delta=0.5, q=2^{12}$}\label{fig:4}
\end{figure}

There are no constraints on locality $r$ in Theorem \ref{thm: 1.1} compared with the bounds given in \cite{BTV17, LMX17}. 
The figure \ref{fig:4} shows that the bound (\ref{eq:cor1.2}) given in Corollary \ref{cor: 1.2} can exceed the asymptotic Gilbert-Varshamov bound of locally repairable codes given for  many localities for $\delta=0.5, q=2^{12}$.

Furthermore, we can show that our bound \eqref{eq:cor1.2} given in Corollary \ref{cor: 1.2} always exceeds the asymptotic Gilbert-Varshamov bound on locally repairable codes for some range of locality $r$ when $q$ is sufficiently large.

\begin{prop}\label{prop: 1.3}
Let $\F_q$ be a non-prime finite field. 
If locality $r$ lies in the range $[c\log_{2} q, +\infty)$ for any constant $c>1$, then the bound \eqref{eq:cor1.2} given in Corollary \ref{cor: 1.2} exceeds the asymptotic Gilbert-Varshamov bound \eqref{asymp_GV_bound}  of locally repairable codes over $\F_q$ for all sufficiently large $q$.
\end{prop}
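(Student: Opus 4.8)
The plan is to compare the two bounds at the single relative distance $\delta=\tfrac12$ --- which is admissible, since $\tfrac12\in[0,1-q^{-1}]$ for all $q\ge2$ --- and to show that at this value the bound of Corollary~\ref{cor: 1.2} already surpasses the \emph{classical} $q$-ary Gilbert--Varshamov quantity $1-H_q(\delta)$. This is enough, because the right-hand side of \eqref{asymp_GV_bound} is never larger than $1-H_q(\delta)$: for every $s\in(0,1]$ one may discard the nonnegative summand $(q-1)(1-s)^{r+1}$ inside the logarithm, obtaining
\[
\frac{1}{r+1}\log_q\Big([1+(q-1)s]^{r+1}+(q-1)(1-s)^{r+1}\Big)-\delta\log_q s\ \ge\ \log_q\big(1+(q-1)s\big)-\delta\log_q s ,
\]
and the infimum of the right-hand side over $s>0$ equals $H_q(\delta)$ (attained at $s=\delta/((1-\delta)(q-1))$). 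Hence the right-hand side of \eqref{asymp_GV_bound} is at most $1-H_q(\delta)$ for every $\delta$, and it suffices to prove that the bound of Corollary~\ref{cor: 1.2} exceeds $1-H_q(\tfrac12)$ for all sufficiently large non-prime $q$.

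Next I would evaluate both sides at $\delta=\tfrac12$. From $H_q(\tfrac12)=\tfrac12\log_q(q-1)+\log_q2$ one gets
\[
1-H_q\big(\tfrac12\big)=\tfrac12-\log_q2-\tfrac12\log_q\tfrac{q-1}{q}=\tfrac12-\frac{1}{\log_2 q}+O\!\left(\frac{1}{q\log q}\right).
\]
On the other hand, the bound of Corollary~\ref{cor: 1.2} --- \eqref{eq:cor1.2} when $q$ is a square and \eqref{eq:cor1.2_2} when $q$ is an odd power of a prime --- has the shape $R\ge\frac{r}{r+1}\big(1-\tfrac{1}{A(q)}\big)-\delta$ with $1/A(q)=O(q^{-1/3})$ for non-prime $q$ (this is the content of the towers of \cite{GS95,BBGS15}), hence $1/A(q)=o(1/\log q)$. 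Using $\frac{r}{r+1}=1-\frac{1}{r+1}$, the value of this bound at $\delta=\tfrac12$ equals $\tfrac12-\frac{1}{r+1}-\frac{r}{(r+1)A(q)}=\tfrac12-\frac{1}{r+1}-o\!\left(\frac{1}{\log q}\right)$.

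Subtracting, the amount by which our bound beats $1-H_q(\tfrac12)$ is
\[
\frac{1}{\log_2 q}-\frac{1}{r+1}-o\!\left(\frac{1}{\log q}\right).
\]
Under the hypothesis $r\ge c\log_2 q$ with $c>1$ we have $\frac{1}{r+1}<\frac{1}{c\log_2 q}$, so this quantity is at least $\frac{1-1/c}{\log_2 q}-o(1/\log q)$, which is positive as soon as $q$ is large enough; note that this estimate is uniform in $r\in[c\log_2 q,\infty)$, so there is nothing extra to do for large $r$. By continuity the strict inequality then persists on an open interval of $\delta$ around $\tfrac12$, which proves Proposition~\ref{prop: 1.3}.

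The step that actually carries the argument is the control of the error term $1/A(q)$: everything reduces to the fact that, for non-prime $q$, the Ihara constant $A(q)$ grows faster than $\log q$ (indeed like a fixed positive power of $q$), so that $1/A(q)$ is negligible compared with the gap $\frac{1}{\log_2 q}-\frac{1}{r+1}$ created by the hypothesis $c>1$. This is precisely why prime fields are excluded: for prime $q$ the best known lower bounds on $A(q)$ are only of order $\log q$, of the same size as the term $\frac{1}{r+1}$, and the comparison would fail. The remaining ingredients --- the identity $H_q(\tfrac12)=\tfrac12\log_q(q-1)+\log_q2$ and the elementary expansions of $\log_q\frac{q-1}{q}$, $\frac{1}{r+1}$ and $1/A(q)$ --- are routine.
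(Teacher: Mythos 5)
Your proposal is correct, and it reaches the same crux as the paper --- evaluating at $\delta=\tfrac12$ and reducing everything to the comparison $\frac{1}{\log_2 q}$ versus $\frac{1}{r+1}+O\bigl(1/A(q)\bigr)$ under $r\ge c\log_2 q$, $c>1$, with $1/A(q)=o(1/\log q)$ for non-prime $q$ --- but it handles the minimization over $s$ by a genuinely different and cleaner device. The paper works with $h(s)=\frac{1}{r+1}\log_q\bigl([1+(q-1)s]^{r+1}+(q-1)(1-s)^{r+1}\bigr)-\delta\log_q s$ directly: it studies the sign of $h'(s)$, asserts a unique critical point $s_0$ located in $\bigl(\frac{1}{q-1},\frac{1}{q-1}+2^{-r}\bigr)$, and uses the mean value theorem to pass from $h(s_0)$ to $h\bigl(\frac{1}{q-1}\bigr)$ minus an error $q2^{-r}/\ln q$, before estimating $h\bigl(\frac{1}{q-1}\bigr)$. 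You instead discard the nonnegative summand $(q-1)(1-s)^{r+1}$, obtaining the pointwise bound $h(s)\ge \log_q(1+(q-1)s)-\delta\log_q s$, whose exact minimum is $H_q(\delta)$ at $s=\delta/((1-\delta)(q-1))$; this shows the LRC Gilbert--Varshamov expression \eqref{asymp_GV_bound} is dominated by the classical bound $1-H_q(\delta)$, and the remaining computation at $\delta=\tfrac12$ (where $H_q(\tfrac12)=\tfrac12\log_q(q-1)+\log_q 2$, and your evaluation point is exactly the paper's reference point $s=\frac{1}{q-1}$) is elementary and uniform in $r$. What your route buys is the elimination of the calculus claims the paper leaves mostly unjustified (monotonicity of the numerator of $h'$, the $2^{-r}$-localization of $s_0$), at the price of a slightly weaker intermediate bound on $\min_s h(s)$ --- a loss that is immaterial here since the dropped term contributes only lower-order corrections; your handling of the non-square case via \eqref{eq:cor1.2_2} and the estimate $1/A(q)=O(q^{-1/3})$ is also consistent with what the paper's proof implicitly uses in its final inequality.
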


Explicit asymptotically good towers of function fields over $\F_q$ are of great interest for coding theory, since they can be applied to construct asymptotically good families of linear codes over $\F_q$. But there doesn't exist explicit towers of function fields over a prime finite field in the literature. 
Using places with high degrees of function fields, the method based on local expansions of functions at a rational place can be generalized to construct asymptotic locally repairable codes over prime finite fields as well. 

\begin{theorem}\label{thm: 1.4}
Let $q$ be a prime and let $r$ be an integer. Let $b$ be an integer which is defined as follows:
$$b=\begin{cases} r,   & \text{ if } q \mbox{ is odd and } r \mbox{ is even};\\ r+1, & \text{ if } r \mbox{ is odd};\\ r+2, &  \text{ if } q=2 \mbox{ and } r \mbox{ is even}.  \end{cases}$$
Let $e$ be an even divisor of $b$.  Then there exists a family of $q$-ary linear locally repairable codes with locality $r$ whose rate $R$ and relative distance 
$\delta$ satisfy
\begin{equation}\label{eq:thm1.4}
R\ge \frac{r}{r+1} -  \frac{b}{r+1} \frac{1}{q^{\frac{e}{2}}-1} - e\delta.
\end{equation}
\end{theorem}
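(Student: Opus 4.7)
The strategy is to imitate the proof of Theorem \ref{thm: 1.1}, replacing the rational places of $F/\F_q$ by places of degree $e$. Since $e$ is even, $\F_{q^e}$ is a square extension of $\F_{q^{e/2}}$, so the Garcia--Stichtenoth tower provides an explicit family of function fields over $\F_{q^e}$ of genus $g\to\infty$ with the ratio of $\F_{q^e}$-rational places to genus tending to $q^{e/2}-1$. These towers are defined by equations with coefficients in the prime field; viewing such a function field as one defined over $\F_q$ preserves the genus and turns the former $\F_{q^e}$-rational places into places of degree $e$ over $\F_q$, while the place at infinity remains $\F_q$-rational. One thus obtains over the prime field $\F_q$ a function field of genus $g$ with a distinguished rational place $P_\infty$ and a collection of $N$ degree-$e$ places $P_1,\dots,P_N$ satisfying $N/g\to q^{e/2}-1$.

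Next, the code is constructed exactly as in Theorem \ref{thm: 1.1}: its parity-check matrix is assembled from coefficients of local expansions at $P_\infty$ of suitably chosen functions from a Riemann--Roch space $\mathcal{L}(mP_\infty)$, with each degree-$e$ place $P_i$ contributing $e$ $\F_q$-coordinates via a fixed $\F_q$-basis of its residue field, so that the total length is $n=eN$. The $n$ coordinates are partitioned into $n/(r+1)$ locality blocks of size $r+1$, and inside each block a single $\F_q$-linear relation is imposed, implemented by an auxiliary local parity-check function of pole order $b$ at $P_\infty$. The integer $b$ is the minimal pole order at which such a local parity-check function can be produced, which depends on parity considerations: when $q$ is odd and $r$ is even one can take $b=r$; when $r$ is odd, $b=r+1$ suffices; and in the remaining case $q=2$ with $r$ even one is forced up to $b=r+2$. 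The condition that $e$ be an even divisor of $b$ is exactly what allows the auxiliary function to descend from an $\F_{q^e}$-rational construction, so that the degree-$e$ places of the tower can serve as the building blocks of the parity-check matrix.

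The rate lower bound then follows by counting codimension contributions. The local parity checks account for $n/(r+1)$ constraints and produce the leading term $r/(r+1)$ in \eqref{eq:thm1.4}. The global AG-type parity checks, arising from the Riemann--Roch gap, contribute at most $bg/(r+1)$ additional constraints; using $g/N\to 1/(q^{e/2}-1)$ together with $n=eN$, this yields the additional rate loss of $\frac{b}{r+1}\cdot\frac{1}{q^{e/2}-1}$. For the relative distance, a Riemann--Roch argument shows that a nonzero codeword must vanish on all but a bounded number of the degree-$e$ places; since each such place carries $e$ $\F_q$-coordinates, an $\F_q$-relative distance $\delta$ for the final code corresponds to an $e$-fold smaller "place-relative-distance" in the underlying AG code, which accounts for the coefficient $e$ in front of $\delta$.

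The main obstacle is the explicit construction, in each of the three parity cases, of an auxiliary local parity-check function of pole order exactly $b$ at $P_\infty$ whose local expansion restricts, on each locality block of $r+1$ positions, to a single non-trivial $\F_q$-linear dependency. The three values of $b$ record the three arithmetic situations under which such a function exists with minimal pole order, and the divisibility $e\mid b$ (with $e$ even) is precisely what is needed so that this function arises from the descent of an $\F_{q^e}$-rational object, thereby making the use of degree-$e$ places compatible with the local-expansion machinery inherited from Theorem \ref{thm: 1.1}.
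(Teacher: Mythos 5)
Your high-level instinct (replace rational places by places of higher degree, keep the local-expansion parity-check machinery, pass to the constant field extension $\F_{q^e}$ to invoke the Garcia--Stichtenoth tower) is the right one and matches the paper's strategy. But the proposal has genuine gaps, and the step you yourself call ``the main obstacle'' is precisely the heart of the paper's proof, which you have mischaracterized. In the paper there is no ``auxiliary local parity-check function of pole order $b$ at $P_\infty$'': the local check inside each block of $r+1$ coordinates is simply the all-one row of $H$, exactly as in Proposition \ref{prop: 3.3}, and the coordinates are indexed by functions $f_{i1},\dots,f_{i,r+1}$, not by residue-field components of evaluations at degree-$e$ places. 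The integer $b$ is the degree of the effective divisor $D_i$ (away from $P_\infty$) supporting the poles of the block-$i$ functions: $D_i=\sum_{u=1}^{b/e}Q_{i,u}$ with each $Q_{i,u}$ of degree $e$ and supported on places of degree dividing $e$, each $Q_{i,u}$ supplying $e$ new basis elements of $\mL((2\g-1)P_\infty+Q_{i,u})$. The three cases for $b$ arise because $b$ must be even (so an even $e$ can divide it) and because the trick $f_{i,r+1}=\alpha_i f_{i1}$ with $\alpha_i\in\F_q^*\setminus\{1\}$ needs $q>2$: hence $b=r$ for $q$ odd and $r$ even, $b=r+1$ for $r$ odd, and $b=r+2$ when $q=2$ and $r$ is even. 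None of this is recovered by your reading of $b$ as ``the minimal pole order of a local parity-check function,'' and without it the construction is not actually specified.

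Two further quantitative steps would fail as written. First, the place count: rational places of $E_i=F_i\F_{q^e}$ correspond to places of $F_i$ of degree dividing $e$ via $\sum_{d\mid e}d\cdot B_d(F_i)=N(E_i)$, so the number of places of degree exactly $e$ grows like $N(E_i)/e$, i.e.\ your claim $N/\g\to q^{e/2}-1$ for degree-$e$ places is off by a factor of $e$ (and ignores places of degree properly dividing $e$); the paper avoids this by packaging all places of degree dividing $e$ into the degree-$e$ divisors $Q_{i,u}$ and using the weighted count. Second, the rate accounting: the assertion that the AG-type checks ``contribute at most $b\g/(r+1)$ constraints'' is unjustified. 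In the paper the non-local checks number $\g+te$, the length is $n=m(r+1)$ with $mb+1=N(E_i)$, and the factor $\frac{b}{r+1}$ appears only through
$\frac{\g}{n}=\frac{\g}{N(E_i)}\cdot\frac{N(E_i)}{m(r+1)}\to \frac{b}{r+1}\cdot\frac{1}{q^{e/2}-1}$,
while the term $e\delta$ comes from using $te$ extra rows to guarantee that any $t$ columns of $H$ are independent (via $\deg\bigl(-(1+te)P_\infty+\sum Q_{ij}\bigr)\le -1$), giving $d\ge t+1$; it is not a rescaling of a ``place-relative distance'' of an evaluation code, since your coordinates-from-residue-fields picture is not the construction being used. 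With $n=eN$ as you set it up, your own accounting would in fact only reproduce the claimed bound when $b=r+1$.
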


In particular, the rate $R$ of locally repairable codes over $\F_2$ with locality $r=11$ is lower bounded by
$$R\ge \begin{cases} 227/252-12\delta, & \text{ if }  0\le \delta \le 4/189;\\ 65/84-6\delta, & \text{ if } 4/189 \le \delta \le 2/21; \\ 7/12-4\delta,  & \text{ if } 2/21\le \delta \le 7/48.\end{cases}$$
The figure \ref{fig:5} gives a comparison of the bound \eqref{eq:thm1.4}  and the asymptotical Gilbert-Varshamov bound \eqref{asymp_GV_bound} for $r=11, q=2$. 
The main advantage of this explicit construction is that there are no constraints on locality over a prime finite field. 
Unfortunately, it seems that the bound \eqref{eq:thm1.4} given in Theorem \ref{thm: 1.4} can't exceed the asymptotical Gilbert-Varshamov bound on locally repairable codes. 

\begin{figure}\centering
\includegraphics[width=3.5in]{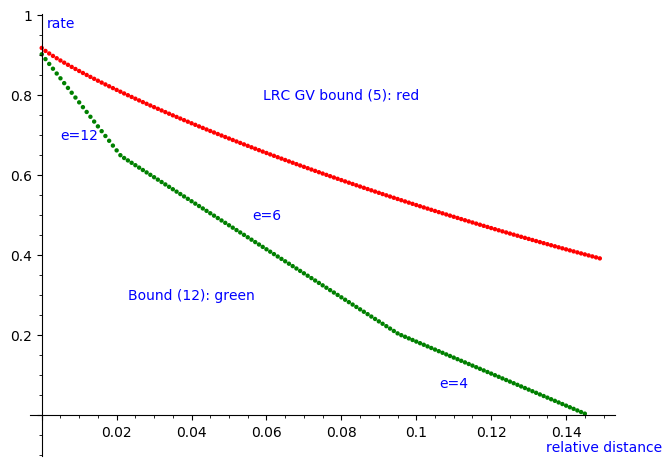}
\caption{$r=11, q=2$}\label{fig:5}
\end{figure}

\subsection{Organization}
This paper is organized as follows. In Section \ref{sec:2}, we introduce some preliminaries on function fields including Riemann-Roch space, local expansion, Ihara's constant $A(q)$ and Garcia-Stichtenoth tower. In Section \ref{sec: 3}, we present our asymptotical construction of locally repairable codes from local expansions of functions at a rational place over non-prime finite fields. This method via function fields is generalized to construct asymptotical locally repairable codes over prime finite fields in Section \ref{sec: 4}.

\section{Preliminaries}\label{sec:2}

In this section, we present some preliminaries on algebraic function fields over finite fields, such as Riemann-Roch space, local expansion, Ihara's constant and Garcia-Stichtenoth tower. The reader may refer to \cite{NX01, St09} for more details.

\subsection{Riemann-Roch space}
Let $F/\F_q$ be an algebraic function field over the full constant field $\F_q$.
Let $\mathbb{P}_F$ denote the set of places of $F$ and let  $\g(F)$ denote the genus of $F$.
The principal divisor of $z\in F^*$ is defined by $$(z)=\sum_{P\in \mathbb{P}_F} \nu_P(z)P,$$ 
where $\nu_P$ is the normalized discrete valuation with respect to the place $P$.
Let $G$ be a divisor of $F$. 
The Riemann-Roch space associated to $G$ is defined by $$\mathcal{L}(G)=\{z\in F^*: (z)\ge -G\}\cup \{0\}.$$
It turns out to be a finite-dimensional vector space over $\F_q$ and its dimension, denoted by $\ell(G)$, is lower bounded by $\ell(G)\ge \deg(G)-\g(F)+1$. If $\deg(G)\ge 2\g(F)-1$, then $\ell(G)=\deg(G)-\g(F)+1$ from Riemann-Roch theorem \cite[Theorem 1.5.17]{St09}.

\subsection{Local expansion}
For a rational place $P$ of $F$, there exists an element $t\in F$ with $\nu_P(t)=1$, which is called a local parameter at $P$. 

For any nonzero function $f\in F$, we can find an integer $v$ such that $\nu_P(f)\ge v$. 
Hence, we have $\nu_P(f/t^v)\ge 0$. Put $a_v=(f/t^v)(P)$. Then $a_v\in \F_q$ and $\nu_P(f/t^v-a_v)\ge 1$. 
It follows that $$\nu_P \left (\frac{f-a_vt^v}{t^{v+1}}\right )\ge 0.$$ Put $a_{v+1}=(f-a_vt^v)/t^{v+1}+P$.
Then we have $a_{v+1}\in \F_q$ and \[\nu_P(f-a_vt^v-a_{v+1}t^{v+1})\ge v+2.\]
By iterating the above process recursively, we can obtain an infinite sequence $\{a_i\}_{i=v}^{\infty}$ in $\F_q$
such that $$\nu_P\left (f-\sum_{i=v}^{m}a_i t^i\right )\ge m+1$$ for all $m\ge v$. 
The formal power series $$f=\sum_{i=v}^{\infty}a_i t^i,$$ is called a \textit{local expansion} of $f$ at $P$. 

Conversely, if $\{a_i\}_{i\ge v}$ is a sequence in $\F_q,$ then the series $\sum_{i=v}^{\infty}a_it^i$ converges in the $P$-adic completion of $F$ and  $$\nu_P\left (\sum_{i=v}^{\infty}a_it^i\right )=\min\{i: a_i\neq 0\}$$ from \cite[Theorem 4.2.6]{St09}.

\subsection{Ihara's constant $A(q)$}
For an integer $\g\ge 0$, let $N_q(\g)$ denote the maximum number of rational places of all function fields over $\F_q$ of genus $\g$. 
The real number defined by $$A(q)=\limsup_{g\rightarrow \infty} \frac{N_q(\g)}{\g}$$ is called Ihara's constant.
If $q$ is a square of prime power, the lower bound $$A(q)\ge \sqrt{q}-1$$
was proved by Ihara in \cite{Ih81} and Tsfasman et al. \cite{TVZ82}, using modular curves.
By refining Ihara's method, Drinfeld and Vladut \cite{DV83}  obtained a tight upper bound $$A(q)\le \sqrt{q}-1.$$
Hence, $A(q)=\sqrt{q}-1$ if $q$ is a square of prime power. For any non-square prime power, the exact value of $A(q)$ is unknown.

For applications in coding theory, one needs algebraic function fields over $\F_q$ with large genus and many rational places, which are given in explicit equations and their rational places can be obtained explicitly as well. 
Garcia and Stichtenoth provided the first explicit tower of function fields over $\F_q$ with $q=\ell^2$ such that the Drinfeld-Vladut bound is achieved \cite{GS95}. 
For an odd power of prime $p^{2m+1}$, Bassa {\it et al.}  \cite{BBGS15} constructed an explicit tower of function fields over $\F_q$ such that the following lower bound can be obtained
$$A(p^{2m+1})\ge \frac{2(p^{m+1}-1)}{p+1+\zeta} \text{ with } \zeta=\frac{p-1}{p^m-1}.$$

\subsection{Garcia-Stichtenoth tower}\label{asymtower}
Let  $q=\ell^2$ be a square of prime power.
The Garcia-Stichtenoth tower of function fields $\mathcal{F}=(F_1,F_2,F_3,\cdots)$ is explicitly given by the rational function field $F_1=\F_q(y_1)$ and $F_m=F_{m-1}(y_m)$ with
\begin{equation}\label{optimaltower}
y_{m}^{\ell}+y_{m}=\frac{y_{m-1}^{\ell}}{y_{m-1}^{\ell -1}+1}
\end{equation}
recursively for $m\ge 2$.
The main properties of the Garcia-Stichtenoth tower are summarized in the following proposition \cite{GS96}.

\begin{prop}\label{prop: 2.1} 
\begin{itemize}
\item[(i)]  The number of rational places of $F_m$ is lower bounded by $N(F_m)\ge (q-\ell)\ell^{m-1}+\ell.$
\item[(ii)]  The genus of $F_m$ is given by
$$\g(F_m)=\begin{cases}
(\ell^{\frac{m}{2}}-1)^2, & \mbox{ if } m \equiv 0(\mbox{mod } 2),\\
(\ell^{\frac{m+1}{2}}-1)(\ell^{\frac{m-1}{2}}-1), & \mbox{ if } m \equiv 1(\mbox{mod } 2).
\end{cases}$$
\item[(iii)]  The Garcia-Stichtenoth tower  $\mathcal{F}=(F_1,F_2,F_3,\cdots)$ is asymptotically optimal, since it attains the Drinfeld-Vladut bound over $\F_q$, i.e., 
$$\lim_{m\rightarrow \infty} \frac{N(F_m)}{\g(F_m)}=\sqrt{q}-1.$$
\end{itemize}
\end{prop}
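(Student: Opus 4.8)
The plan is to follow the analysis of Garcia and Stichtenoth in \cite{GS95, GS96}, organized around the fact that each step $F_m/F_{m-1}$ is an \emph{Artin--Schreier type} extension of degree $\ell$. Indeed, the additive polynomial $\wp(y):=y^{\ell}+y$ is $\F_{\ell}$-linear and separable, its kernel $\wp^{-1}(0)=\{c:c^{\ell-1}=-1\}\cup\{0\}$ is a one-dimensional $\F_{\ell}$-subspace of $\overline{\F}_q$ which is in fact contained in $\F_q$ (recall $q=\ell^2$), and the defining relation reads $\wp(y_m)=w_{m-1}$ with $w_{m-1}:=y_{m-1}^{\ell}/(y_{m-1}^{\ell-1}+1)$. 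Hence for a place $P$ of $F_{m-1}$: if $\nu_P(w_{m-1})\ge 0$, then $P$ is unramified and its splitting is governed by how $T^{\ell}+T-w_{m-1}(P)$ factors over the residue field; if $w_{m-1}$ has a pole at $P$ of order prime to $p$ (possibly only after subtracting a suitable $\wp(z)$), then $P$ is totally ramified with a computable different exponent. First I would record that $[F_m:F_{m-1}]=\ell$: by induction $w_{m-1}$ has a pole of order exactly $1$ at the unique place $P_{\infty}^{(m-1)}$ above the pole $P_{\infty}$ of $y_1$, so that place is totally ramified and forces the degree.

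For part (i), the key elementary input is a norm/trace identity. For $\beta\in\F_q=\F_{\ell^2}$ with $\beta\neq 0$ and $\beta^{\ell-1}\neq -1$ --- equivalently $\Tr_{\F_{\ell^2}/\F_{\ell}}(\beta)\neq 0$ --- using $\beta^{\ell}=\overline{\beta}$ and $\beta^{\ell-1}+1=\Tr(\beta)/\beta$ one gets
$$\frac{\beta^{\ell}}{\beta^{\ell-1}+1}=\frac{N(\beta)}{\Tr(\beta)}\in\F_{\ell},$$
while $\F_{\ell}=\wp(\F_{\ell^2})$ since $\wp(\gamma)=\gamma+\gamma^{\ell}=\Tr(\gamma)$ on $\F_{\ell^2}$. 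Consequently a rational place $P_{\alpha}$ of $F_1$ with $y_1(P_{\alpha})=\alpha$ and $\Tr(\alpha)\neq 0$ is unramified in $F_2/F_1$ and splits completely into $\ell$ rational places, at each of which the residue $\beta'$ of $y_2$ satisfies $\Tr(\beta')=\wp(\beta')=w_1(P_{\alpha})=N(\alpha)/\Tr(\alpha)\neq 0$; the hypothesis therefore propagates, so $P_{\alpha}$ splits completely all the way up, contributing $\ell^{m-1}$ rational places to $F_m$. There are $\ell^2-\ell=q-\ell$ such $\alpha$. Adjoining $P_{\infty}^{(m)}$ (rational, totally ramified) and the $\ell-1$ totally ramified chains above the places $y_1=\gamma$ with $\gamma^{\ell-1}=-1$ --- each remains a single rational place because $w_i$ keeps a simple pole along it --- gives $N(F_m)\ge(q-\ell)\ell^{m-1}+\ell$, which is (i).

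For part (ii) I would determine the full ramification of $F_m/F_{m-1}$ and substitute into the Hurwitz formula $2\g(F_m)-2=\ell\big(2\g(F_{m-1})-2\big)+\deg\mathfrak{D}(F_m/F_{m-1})$ with $\g(F_1)=0$. Besides the simple poles of $w_{m-1}$ at $P_{\infty}^{(m-1)}$ and along the $\gamma$-chains (each giving different exponent $2(\ell-1)$), ramification also accumulates in a subtree growing over the zero $P_0$ of $y_1$: a place with $y_i$-residue $0$ splits, but $\ell-1$ of the resulting places acquire residue $\zeta$ with $\zeta^{\ell-1}=-1$, at which $w_i$ has a pole of order divisible by $\ell$ that must be reduced modulo $\wp$ before one can read off when, and with what different exponent, it eventually ramifies higher up. Carrying out this bookkeeping yields $\deg\mathfrak{D}(F_m/F_{m-1})=2(\ell-1)\,\ell^{\lceil (m-1)/2\rceil}$, and solving the recursion --- the ceiling is exactly what distinguishes $m$ even from $m$ odd --- produces the closed forms in (ii). Part (iii) is then immediate arithmetic: (i) and (ii) give $\g(F_m)=(\ell^{\lceil m/2\rceil}-1)(\ell^{\lfloor m/2\rfloor}-1)\sim\ell^{m}$ and $N(F_m)\ge(\ell-1)\ell^{m}+\ell$, so $N(F_m)/\g(F_m)\to\ell-1=\sqrt{q}-1$, which together with the Drinfeld--Vladut bound $A(q)\le\sqrt{q}-1$ shows $\mathcal{F}$ is asymptotically optimal.

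The step I expect to be the genuine obstacle is (ii): the ramification in the $P_0$-subtree is wild and its pole orders are divisible by $p$, so pinning down the true different exponents requires an Artin--Schreier reduction at each such place together with careful control of how those pole orders evolve as one climbs the tower; this is the technical core of \cite{GS96}. Parts (i) and (iii) are comparatively routine once the splitting criterion and the norm/trace identity are in hand.
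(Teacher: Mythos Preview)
Your sketch is a correct outline of the Garcia--Stichtenoth argument: the norm/trace identity driving the complete splitting in (i), the different degree $\deg\mathfrak{D}(F_m/F_{m-1})=2(\ell-1)\ell^{\lceil(m-1)/2\rceil}$ feeding the Hurwitz recursion in (ii), and the limit in (iii) are all as in \cite{GS96}. However, the paper itself does not prove this proposition; it is stated in the preliminaries purely as a summary of known facts with the proof deferred entirely to the citation \cite{GS96}, so there is no in-paper argument to compare your proposal against---you have simply (and accurately) reconstructed the cited source.
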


\section{The Asymptotic Tsfasman-Vladut-Zink bound for LRCs}\label{sec: 3}

Let $F/\F_q$ be a global function field of genus $\g$ and let $\Pin$ be a distinguished rational place of $F$.  
The dimension of the Riemann-Roch space $\mL((2\g-1)P_{\infty})$ is equal to $\g$ from Riemann-Roch theorem. 
Let $\{f_1,f_2,\dots,f_\g\}$ be an $\F_q$-basis of $\mL((2\g-1)\Pin)$. Then there exist $\g$ distinct pole numbers $0=n_1<n_2<\cdots <n_\g\le 2\g-1$ such that $(f_j)_{\infty}=n_jP_{\infty}$ for $1\le j\le \g$ from Weierstrass gap theorem \cite[Theorem 1.6.8]{St09}.
Let $\pi\in F$ be a local parameter of $P_{\infty}$. For $1\le j\le n$, the local expansions of $f_j$ at $\Pin$ are given in the following form 
\begin{equation}\label{eq:expansion-f}
f_j=\pi^{-2\g+1}\sum_{i=0}^{\infty}c_{i,j}\pi^i,
\end{equation}
where all coefficients $c_{i,j}\in \F_q$. For a non-negative integer $t$,  consider the $(2\g+t)\times \g$ matrix

\begin{equation}\label{eq:A}
A=\left(\begin{array}{cccc}
c_{0,1}& c_{0,2}& \cdots& c_{0,\g}\\
c_{1,1}&c_{1,2}& \cdots& c_{1,\g}\\
\vdots& \vdots& \vdots&\vdots\\
c_{2\g-1+t,1}&c_{2\g-1+t,2}& \cdots&c_{2\g-1+t,\g}
\end{array}\right).
\end{equation}

\begin{lemma}\label{lem:rank} 
The matrix $A$ defined in \eqref{eq:A} has rank $\g$.
\end{lemma}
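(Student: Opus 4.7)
The plan is to show that the columns of $A$ are $\F_q$-linearly independent by translating a dependence among the columns into a dependence among the basis functions $f_1,\dots,f_\g$ via the valuation $\nu_{P_\infty}$.

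First I would suppose that there exist scalars $\lambda_1,\dots,\lambda_\g\in\F_q$, not all zero, such that $\sum_{j=1}^{\g}\lambda_j c_{i,j}=0$ for every row index $i=0,1,\dots,2\g-1+t$ of $A$. Forming the function $f=\sum_{j=1}^{\g}\lambda_j f_j$ and using the linearity of local expansion, the expansion \eqref{eq:expansion-f} gives
\begin{equation*}
f=\pi^{-2\g+1}\sum_{i=0}^{\infty}\Bigl(\sum_{j=1}^{\g}\lambda_j c_{i,j}\Bigr)\pi^{i}.
\end{equation*}
The vanishing of the first $2\g+t$ coefficients then forces the lowest nonzero power of $\pi$ in this series to have index at least $2\g+t$, which, after multiplying by $\pi^{-2\g+1}$, means $\nu_{P_\infty}(f)\ge t+1\ge 1$.

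Next I would derive a contradiction using the Riemann-Roch space constraint. Since each $f_j\in\mL((2\g-1)P_\infty)$, so is $f$, which means $\nu_Q(f)\ge 0$ at every place $Q\ne P_\infty$ and $\nu_{P_\infty}(f)\ge -(2\g-1)$. Combined with $\nu_{P_\infty}(f)\ge 1$ just obtained, the principal divisor $(f)$ would be nonnegative with $\nu_{P_\infty}(f)\ge 1$, hence of strictly positive degree, which is impossible for a nonzero function. Therefore $f=0$, and because $\{f_1,\dots,f_\g\}$ is an $\F_q$-basis of $\mL((2\g-1)P_\infty)$, all $\lambda_j$ vanish.

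This contradicts the assumption that the $\lambda_j$ are not all zero, so the $\g$ columns of $A$ are linearly independent and $\mathrm{rank}(A)=\g$. The only conceptually delicate step is ensuring that the coefficients $c_{i,j}$ really do determine $\nu_{P_\infty}(f)$ from below by the first nonzero index, which is guaranteed by the convergence property of $P$-adic expansions recalled in the Preliminaries; once that is in hand, the argument is a clean application of the fact that a function in a Riemann-Roch space attached to a single place with a zero at that place must vanish.
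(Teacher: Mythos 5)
Your proposal is correct and follows essentially the same route as the paper: a dependence among the columns yields $\nu_{P_\infty}\bigl(\sum_j\lambda_j f_j\bigr)\ge t+1$, which forces the combination to be the zero function and hence all $\lambda_j=0$. The only (harmless) difference is in the last step: the paper concludes $\sum_j\lambda_j f_j=0$ from the fact that a nonzero combination of the chosen basis functions, whose pole divisors are $n_jP_\infty$ with distinct pole numbers, has $\nu_{P_\infty}\le 0$, whereas you argue via the degree-zero property of principal divisors (equivalently $\mL(-P_\infty)=\{0\}$), which makes your argument independent of the Weierstrass-type choice of basis.
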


\begin{proof} Suppose that the rank of $A$ is strictly less than $\g$.   Let $\ba_1,\ba_2,\dots,\ba_\g$ be the columns of $A$.  
Then there exist elements $\Gl_1,\Gl_2,\dots,\Gl_\g$ in $\F_q$, not all are zero, such that $\sum_{j=1}^\g\Gl_j\ba_j=\bo.$

The local expansion of $\sum_{j=1}^\g\Gl_j f_j$ has the form $\pi^{-2\g+1}\sum_{i=2\g+t}^{\infty}c_{i}\pi^i$, where $c_i=\sum_{j=1}^\g\Gl_jc_{i,j}$ for all $i\ge 2\g+t$. 
This means that $\nu_\Pin(\sum_{j=1}^\g\Gl_j f_j)\ge 1+t>0.$
On the other hand, $\sum_{j=1}^\g\Gl_j f_j$ is a function of $\mL((2\g-1)\Pin)$. This forces that $\sum_{j=1}^\g\Gl_j f_j$ is the zero function, because it is easy to see $\nu_\Pin(\sum_{j=1}^\g\Gl_j f_j)\le 0$ from the choices of $f_j$ for $1\le j\le \g$.
Hence, $\Gl_j=0$ for all $1\le j\le n$ which lead to a contradiction.
\end{proof}

\begin{rmk}\label{rmk:rank}{\rm \begin{itemize}
\item[(i)] From the proof of Lemma \ref{lem:rank}, it is easy to see that the submatrix of $A$ defined in \eqref{eq:A} consisting of the first $2\g$ rows has rank $\g$.
\item[(ii)] By using Gaussian elimination, one can efficiently find an invertible $\g\times \g$ submatrix of $A$.
\end{itemize}
}\end{rmk}

Let $r$ be a fixed positive integer.  Let $\{P_{ij}: 1\le i\le m, 1\le j\le r\}$ be a set of distinct rational places of $F$ which are different from $P_{\infty}$. 
For each $P_{ij}$, we extend $\{f_1,f_2,\dots,f_\g\}$ to an $\F_q$-basis $\{f_1,f_2,\dots,f_\g,g_{ij}\}$ of $\mL((2\g-1)\Pin+P_{ij})$. Furthermore, we have $\nu_{P_{ij}}(g_{ij})=-1$ for $1\le i\le m, 1\le j\le r$. 

Without loss of generality, we assume that the submatrix of $A$ defined in \eqref{eq:A} consisting of the first $\g$ rows has rank $\g$. Denote by $A_1$ the submatrix of $A$  consisting of the first $\g$ rows.
Assume that $g_{ij}$ has the following local expansions at $\Pin$:
\begin{equation}\label{eq:expansion-g}
g_{ij}=\pi^{-2\g+1}\sum_{p=0}^{\infty}b_{pij}\pi^p.
\end{equation}
Let $(\Ga_{1ij},\Ga_{2ij},\dots,\Ga_{\g ij})\in\F_q^\g$ be the unique solution of the system of linear equations $A_1\bx^T=(b_{0ij},b_{1ij},\dots,b_{\g-1,i,j})^T$. 
Put $f_{ij}=g_{ij}-\sum_{w=1}^\g\Ga_{wij}f_w$. Then it is easy to verify that local expansions of $f_{ij}$ are given in the form
 \begin{equation}\label{eq:expansion-fij}
f_{ij}=\pi^{-2\g+1}\sum_{p=\g}^{\infty}a_{pij}\pi^p,
\end{equation}
where $a_{pij}=b_{pij}-\sum_{w=1}^\g\Ga_{wij} c_{pw}$. Furthermore, we have the following results.

\begin{lemma}\label{lem: 3.2} 
One has the following facts:
\begin{itemize}
\item[{\rm (i)}] $f_{ij}\in\mL((2\g-1)\Pin+P_{ij})$ and $\nu_{P_{ij}}(f_{ij})=-1$ for all $1\le i \le m$ and $1\le j\le r$;
\item[{\rm (ii)}]  $\nu_{P_{uv}}(f_{ij})\ge 0$ whenever $u\neq i$ or $v\neq j$;
\item[{\rm (iii)}] $\{f_{ij}: 1\le i\le m, 1\le j\le r\}$ are linearly independent over $\F_q$.
\end{itemize}
\end{lemma}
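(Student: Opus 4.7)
The plan is to prove all three assertions by direct bookkeeping of pole orders, reading off the valuation behavior of $f_{ij}=g_{ij}-\sum_{w=1}^{\g}\Ga_{wij}f_w$ at the various rational places. The only nontrivial ingredient beyond the construction itself is the fact that $g_{ij}$ has pole order exactly one at $P_{ij}$, which comes from a Riemann--Roch dimension count. Specifically, since $\deg((2\g-1)P_\infty)\ge 2\g-1$ and $\deg((2\g-1)P_\infty+P_{ij})\ge 2\g-1$, Riemann--Roch gives $\ell((2\g-1)P_\infty)=\g$ and $\ell((2\g-1)P_\infty+P_{ij})=\g+1$. Because $\{f_1,\dots,f_\g\}$ is already a basis of the former and $g_{ij}$ extends it to a basis of the latter, $g_{ij}\notin\mL((2\g-1)P_\infty)$, and hence $\nu_{P_{ij}}(g_{ij})<0$; on the other hand $\nu_{P_{ij}}(g_{ij})\ge -1$, forcing equality.

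For part (i), I would first observe that $f_{ij}\in\mL((2\g-1)P_\infty+P_{ij})$ because both $g_{ij}$ and every $f_w$ lie in that space. Then, since each $f_w$ is regular at $P_{ij}$, $\nu_{P_{ij}}\!\left(\sum_{w=1}^{\g}\Ga_{wij}f_w\right)\ge 0$, while $\nu_{P_{ij}}(g_{ij})=-1$ by the preceding paragraph; so the strict triangle inequality for valuations yields $\nu_{P_{ij}}(f_{ij})=-1$.

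For part (ii), I would simply note that for any $(u,v)\neq(i,j)$, the place $P_{uv}$ does not appear in the support of the divisor $(2\g-1)P_\infty+P_{ij}$, so $\nu_{P_{uv}}(g_{ij})\ge 0$; and of course $\nu_{P_{uv}}(f_w)\ge 0$ for each $w$ since $f_w\in\mL((2\g-1)P_\infty)$. Taking the $\F_q$-linear combination preserves nonnegativity, giving $\nu_{P_{uv}}(f_{ij})\ge 0$.

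For part (iii), I would use parts (i) and (ii) to run a standard pole-separation argument. Suppose $\sum_{i,j}\Gm_{ij}f_{ij}=0$ in $F$ with scalars $\Gm_{ij}\in\F_q$ not all zero, and pick any $(i_0,j_0)$ with $\Gm_{i_0j_0}\neq 0$. Evaluating $\nu_{P_{i_0j_0}}$, part (i) gives $\nu_{P_{i_0j_0}}(\Gm_{i_0j_0}f_{i_0j_0})=-1$, while part (ii) gives $\nu_{P_{i_0j_0}}(\Gm_{ij}f_{ij})\ge 0$ for every $(i,j)\neq(i_0,j_0)$; the strict triangle inequality then forces $\nu_{P_{i_0j_0}}$ of the whole sum to equal $-1$, contradicting the assumption that the sum is the zero function. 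Hence all $\Gm_{ij}$ vanish. The main (minor) obstacle throughout is really just the one Riemann--Roch count that pins down $\nu_{P_{ij}}(g_{ij})=-1$; everything else is a clean consequence of the construction together with the ultrametric behavior of valuations.
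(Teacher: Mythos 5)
Your proof is correct and follows essentially the same route as the paper: membership in $\mL((2\g-1)\Pin+P_{ij})$ plus the strict triangle inequality for (i) and (ii), and the same pole-separation argument at a place with a nonzero coefficient for (iii). The only addition is your Riemann--Roch count justifying $\nu_{P_{ij}}(g_{ij})=-1$, a fact the paper simply builds into the choice of $g_{ij}$; that justification is valid and harmless.
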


\begin{proof} 
\begin{itemize}
 \item[{\rm (i)}]  It is easy to see that $f_{ij}=g_{ij}-\sum_{w=1}^\g\Ga_{wij}f_w\in \mL((2\g-1)\Pin+P_{ij})$ and $\nu_{P_{ij}}(g_{ij})=-1$.  From the strict triangle inequality \cite[Lemma 1.1.11]{St09}, we have $$\nu_{P_{ij}}(f_{ij})=\min \Big\{\nu_{P_{ij}}(g_{ij}), \nu_{P_{ij}}\big{(}\sum_{w=1}^\g\Ga_{wij}f_w\big{)}\Big\}=  -1.$$
\item[{\rm (ii)}] 
It follows directly from $f_{ij}=g_{ij}-\sum_{w=1}^\g\Ga_{wij}f_w\in \mL((2\g-1)\Pin+P_{ij})$. 
\item[{\rm (iii)}] 
Suppose that there exist $\Gl_{ij}\in \F_q$ for $1\le i\le m, 1\le j\le r$, not all are zero, such that $$\sum_{i=1}^m \sum_{j=1}^{r}\Gl_{ij} f_{ij}=0.$$
Assume that $\Gl_{i_0,j_0}\neq 0$ for some $1\le i_0\le m$ and $1\le j_0\le r$. Using the strict triangle inequality, we have
$$\nu_{P_{i_0,j_0}}\left(\sum_{i=1}^m \sum_{j=1}^{r}\Gl_{ij} f_{ij}\right)=-1$$
from items (1) and (2). Hence, we obtain a contradiction. 
\end{itemize}
\end{proof}

For each $1\le i\le m$, let $1\neq \alpha_i\in \F_q^*$, $f_{i,r+1}=\alpha_i f_{i1}$ and define the matrix
\begin{equation}\label{eq:D}
D_i=\left(\begin{array}{ccccc}
a_{\g, i,1}& a_{\g, i,2}& \cdots& a_{\g, i,r}& \alpha_i a_{\g, i,1}\\
a_{\g+1, i,1}& a_{\g+1, i,2}& \cdots& a_{\g+1, i,r}& \alpha_i a_{\g+1, i,1}\\
\vdots&\vdots & \vdots& \vdots&\vdots\\
a_{2\g-1+t,i,1} &a_{2\g-1+t,i,2}& \cdots&a_{2\g-1+t,i,r} &  \alpha_i a_{2\g-1+t,i,1}
\end{array}\right).
\end{equation}
The $j$-column of $D_i$ is obtained from local expansions of $f_{ij}$ for each $1\le j\le r+1$. 
Furthermore, we define the matrix
\begin{equation}\label{eq:H3} H=\left(\begin{array}{c|c|c|c}
\bi&\bo&\cdots&\bo\\ \hline
\bo&\bi&\cdots&\bo \\ \hline
\vdots&\vdots&\ddots&\vdots \\ \hline
\bo&\bo&\cdots&\bi \\ \hline
D_1&D_2&\cdots&D_m
\end{array}
\right),
\end{equation}
where $\bi$ and $\bo$ stand for the all-one vector and the zero vector of length $r+1$, respectively.

\begin{prop}\label{prop: 3.3}
Let $H$ be the matrix defined in \eqref{eq:H3}.
Let $C$ be the linear code with $H$ as a parity-check matrix. 
Then $C$ is a $q$-ary $[n,k,d]$-linear code with locality $r$ and
\[n=m(r+1),\quad k\ge n-\frac n{r+1}-\g-t,\quad d\ge t+1.\]
\end{prop}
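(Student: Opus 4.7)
The plan is to read off the length, dimension, and locality directly from the block structure of $H$, and then to establish the minimum distance by associating to any low-weight codeword a function on $F$ that must vanish identically by a Riemann--Roch-type degree count.

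First, the three easy items. The matrix $H$ has $m(r+1)$ columns and $m+\g+t$ rows, so $n=m(r+1)$ and
\[
k \;=\; n-\mathrm{rank}(H)\;\ge\; n-(m+\g+t)\;=\; n-\frac{n}{r+1}-\g-t.
\]
The $i$-th top row of $H$ gives the parity check $c_{i,1}+c_{i,2}+\cdots+c_{i,r+1}=0$, which involves only the $r+1$ coordinates of block $i$; hence any one of them is the negative sum of the other $r$, yielding locality $r$.

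For the distance, suppose $c=(c_{ij})$ is a codeword of Hamming weight at most $t$ and consider
\[
f \;=\; \sum_{i=1}^{m}\sum_{j=1}^{r+1}c_{ij}\,f_{ij}.
\]
By Lemma \ref{lem: 3.2}(i)--(ii), each $f_{ij}$ with $1\le j\le r$ has its only off-$\Pin$ pole at $P_{ij}$ and that pole is simple; since $f_{i,r+1}=\alpha_if_{i,1}$, the function $f_{i,r+1}$ has its only off-$\Pin$ pole at $P_{i,1}$ as well. Therefore the pole divisor of $f$ away from $\Pin$ is supported on at most $|\mathrm{supp}(c)|\le t$ distinct places, each with multiplicity at most one, so its degree is at most $t$. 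On the other hand, the bottom parity checks $\sum_{i,j}c_{ij}\,a_{p,i,j}=0$ for $p=\g,\g+1,\dots,2\g-1+t$ assert, via \eqref{eq:expansion-fij}, that the coefficients of $\pi^{-\g+1},\pi^{-\g+2},\dots,\pi^{t}$ in the local expansion of $f$ at $\Pin$ all vanish (the lower-order coefficients of $f$ vanish automatically by construction of the $f_{ij}$). Hence $\nu_{\Pin}(f)\ge t+1$, so the zero divisor of $f$ has degree at least $t+1$ from $\Pin$ alone. Since the zero and pole divisors of any nonzero principal divisor have the same degree, this forces $f=0$.

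The final step converts $f=0$ into $c=0$ and is the subtlest part. Substituting $f_{i,r+1}=\alpha_if_{i,1}$ and regrouping yields
\[
\sum_{i=1}^{m}\Big[(c_{i,1}+\alpha_ic_{i,r+1})f_{i,1}+\sum_{j=2}^{r}c_{ij}f_{ij}\Big]=0,
\]
and the linear independence of $\{f_{ij}:1\le i\le m,\,1\le j\le r\}$ from Lemma \ref{lem: 3.2}(iii) gives $c_{ij}=0$ for $2\le j\le r$ and $c_{i,1}+\alpha_ic_{i,r+1}=0$. The top parity check then reduces to $c_{i,1}+c_{i,r+1}=0$, so $(\alpha_i-1)c_{i,r+1}=0$, and because $\alpha_i\ne 1$ we conclude $c_{i,r+1}=c_{i,1}=0$, i.e.\ $c=0$. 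The main obstacle is precisely this coupling: the $(r+1)$-th column is the only place where the naive linear-independence argument breaks down, and the condition $\alpha_i\ne 1$ is exactly what is needed to restore the distance estimate in the presence of that redundancy.
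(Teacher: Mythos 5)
Your proof is correct and follows essentially the same route as the paper's: length, dimension and locality are read off the block structure of $H$, the bottom rows force $\nu_{P_{\infty}}(f)\ge t+1$, a degree count (equivalent to the paper's observation that the function lies in a Riemann--Roch space of a divisor of negative degree) forces $f=0$, and Lemma \ref{lem: 3.2}(iii) together with $\alpha_i\neq 1$ and the all-one rows kills the coefficients. Your only real deviation is cosmetic: regrouping the coupled columns as $(c_{i,1}+\alpha_i c_{i,r+1})f_{i,1}+\sum_{j=2}^{r}c_{ij}f_{ij}$ and invoking linear independence once replaces the paper's preliminary elimination of single-column blocks and its case analysis, but the substance is identical.
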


\begin{proof}
It is obvious that the length of $C$ is $n=m(r+1)$. 
Since the number of rows of the parity-check matrix $H$ is $m+\g+t$,  the dimension of $C$ is at least $n-m-\g-t$. 
It is sufficient to prove that any $t$ columns of $H$ are linearly independent over $\F_q$. 

Choose any $t$ columns $\{\bh_{ij}\}_{1\le i\le m;j\in S_i}$ of $H$, where $S_i$ are subsets of $\{1,2,\dots,r+1\}$ satisfying $\sum_{i=1}^m|S_i|=t$. 
Define the subset $I=\{1\le i\le m:\; |S_i|\ge 2\}$. Let $\{\Gl_{ij}\}_{1\le i\le m;j\in S_i}$ be the elements of $\F_q$ such that $\sum_{i=1}^m\sum_{j\in S_i}\Gl_{ij}\bh_{ij}=\bo$, i.e., 
$$\sum_{i\in I}\sum_{j\in S_i}\Gl_{ij}\bh_{ij}=-\sum_{i\not\in I}\sum_{j\in S_i}\Gl_{ij}\bh_{ij}.$$
Firstly, we claim that $\Gl_{ij}=0$ for all $i\in \{1,2,\dots,m\}\setminus I$ and $j\in S_i$. 
Otherwise, we may assume that $\Gl_{i_0,j_0}\neq 0$ for some $i_0\notin I$ and $j_0\in S_{i_0}$. 
In this case, we must have $|S_{i_0}|=|\{j_0\}|=1$. 
Hence, the $i_0$-th position of $-\sum_{i\not\in I}\sum_{j\in S_i}\Gl_{ij}\bh_{ij}$ is $-\Gl_{i_0,j_0}$, while the $i_0$-th position of $\sum_{i\in I}\sum_{j\in S_i}\Gl_{ij}\bh_{ij}$ is $0$. 
This is a contradiction. Thus, we have $$\sum_{i\in I}\sum_{j\in S_i}\Gl_{ij}\bh_{ij}=\bo.$$
This implies that the local expansion of the function $\sum_{i\in I}\sum_{j\in S_i}\Gl_{ij}f_{ij}$ is
\[\pi^{-2\g+1}\sum_{p=2\g+t}^{\infty}\left(\sum_{i\in I}\sum_{j\in S_i}\Gl_{ij}a_{pij}\right)\pi^p.\]
Therefore, $\nu_{\Pin}(\sum_{i\in I}\sum_{j\in S_i}\Gl_{ij}f_{ij})\ge 1+t$. 
On the other hand, $\sum_{i\in I}\sum_{j\in S_i}\Gl_{ij}f_{ij}$ belongs to the Riemman-Roch space $\mL((2\g-1)\Pin+\sum_{i\in I} \sum_{j\in S_i} P_{ij})$.  
Hence, we have \[\sum_{i\in I}\sum_{j\in S_i}\Gl_{ij}f_{ij}\in \mL\Big{(}-(1+t)\Pin+\sum_{i\in I} \sum_{j\in S_i} P_{ij}\Big{)}.\]
As $\deg\left(-(1+t)\Pin+\sum_{i\in I} \sum_{j\in S_i} P_{ij}\right)= -(1+t)+\sum_{i\in I}|S_i|\le -(1+t)+t= -1$,  
we have $$\sum_{i\in I}\sum_{j\in S_i}\Gl_{ij}f_{ij}=0.$$

It remains to show that $\Gl_{ij}=0$ for all $i\in I$ and $j\in S_i$.
If there doesn't exist $u\in I$ such that $1,r+1$ are in $S_u$ at the same time, then it is easy to see that $\Gl_{ij}=0$ for all $i\in I$ and $j\in S_i$ from Lemma \ref{lem: 3.2}(iii).
Otherwise, we have
$$-\sum_{i\in I\setminus \{u\}}\sum_{j\in S_i}\Gl_{ij}f_{ij}=\sum_{j\in S_u} \Gl_{uj}f_{uj}=(\Gl_{u,1}+\alpha_u \Gl_{u,r+1})f_{u1}+\sum_{j\in S_u\setminus \{1,i+1\}}\Gl_{uj}f_{uj}.$$
{\bf Case 1:} If there exists some $j_0\in S_u\setminus \{1,i+1\}$ such that $\Gl_{u,j_0}\neq 0$, then 
we have $\nu_{P_{u,j_0}}(-\sum_{i\in I\setminus \{u\}}\sum_{j\in S_i}\Gl_{ij}f_{ij})\ge 0$ and $\nu_{P_{u,j_0}}(\sum_{j\in S_u} \Gl_{uj}f_{uj})=-1$ from Lemma \ref{lem: 3.2}. This is impossible. \\
{\bf Case 2:} Otherwise, $\Gl_{u,j}=0$ for all $j\in S_u\setminus \{1,i+1\}$. If $\Gl_{u,1}+\alpha_u \Gl_{u,r+1}\neq 0$, then we have $\nu_{P_{u,1}}(-\sum_{i\in I\setminus \{u\}}\sum_{j\in S_i}\Gl_{ij}f_{ij})\ge 0$ and $\nu_{P_{u,1}}(\sum_{j\in S_u} \Gl_{uj}f_{uj})=-1$ from Lemma \ref{lem: 3.2}. This is impossible.
Thus, we have $\Gl_{u,1}+\alpha_u \Gl_{u,r+1}=0$.
Moreover, it is easy to see that $\sum_{j\in S_u} \Gl_{u,j}=0$ from the parity-check matrix $H$. 
Note that $\alpha_u\neq 1$. 
Thus, $\Gl_{u,j}=0$ for all $j\in S_u$.
Recursively, we can show that $\Gl_{ij}=0$ for all $i\in I$ and $j\in S_i$.
\end{proof}

By employing asymptotically good towers of function fields \cite{BBGS15, GS96}, we can obtain an explicit construction of asymptotical Tsfasman-Vladut-Zink type bound for locally repairable codes over non-prime finite fields. 

\subsection{The proof of Theorem \ref{thm: 1.1}}

With all preparations in this section, we are now able to prove Theorem \ref{thm: 1.1}.

\begin{proof}
Let $F_1, F_2, \cdots$ be a sequence of function fields over $\F_q$ such that $\g_i=\g(F_i)$ and $N_i=N(F_i)$ satisfy 
$$\lim_{i\rightarrow \infty} \g_i=\infty \text{ and } \lim_{i\rightarrow \infty} \frac{N_i}{\g_i}=A(q).$$
Let $m_i$ be the integer part of $(N_i-1)/r$, and $n_i=m_i(r+1)$. 
From Proposition \ref{prop: 3.3}, there exists a sequence of $q$-ary $[n_i,k_i,d_i]$-linear codes with locality $r$,
\[k_i\ge n_i-\frac{n_i}{r+1}-\g_i-t_i \text{ and } d_i\ge t_i+1.\]
It is easy to see that 
\begin{align*}
\frac{k_i}{n_i} & \ge  \frac{r}{r+1}-\frac{\g_i}{n_i}-\frac{t_i}{n_i}\\
 & \ge \frac{r}{r+1}-\frac{\g_i}{m_i(r+1)}-\frac{t_i}{n_i}\\
  & \ge \frac{r}{r+1}-\frac{\g_i}{N_i}\cdot \frac{N_i}{m_i(r+1)}-\frac{d_i-1}{n_i}.\\
\end{align*}
Without loss of generality, we can assume that the following two limits exist: 
$$R=\lim_{i\rightarrow \infty} \frac{k_i}{n_i} \text{ and } \delta=\lim_{i\rightarrow \infty} \frac{d_i}{n_i}.$$
Hence, we have
$$R\ge \frac{r}{r+1}-\frac{1}{A(q)} \frac{r}{r+1}-\delta.$$
\end{proof}

\subsection{The Proof of Proposition \ref{prop: 1.3}}
We provide a proof for Proposition \ref{prop: 1.3} in this subsection.
\begin{proof}
Let $\F_q$ be a non-prime finite field. 
Put $\delta=1/2$ and
$$h(s)=\frac{ \log_q\big([1+(q-1)s]^{r+1}+(q-1)(1-s)^{r+1}\big)}{r+1}-\delta \log_q s.$$
It is easy to see that the numerator of the derivative
$$ h^{\prime}(s)=\frac{(1+(q-1)s)^r((q-1)s-1)-(q-1)(1-s)^r(1+s)}{2s[(1+(q-1)s)^{r+1}+(q-1)(1-s)^{r+1}]\cdot \ln q}$$
is increasing in the interval $(0,1]$ and has a unique critical point $s_0\in (0,1]$ such that $h^{\prime}(s_0)=0.$
It follows that $h(s)$ is decreasing in the interval $(0,s_0]$ and increasing in the interval $[s_0,1]$. Hence, $h(s)$ achieves the minimum value at the point $s=s_0$. It is easy to verify that $s_0\in (\frac{1}{q-1},\frac{1}{q-1}+\epsilon)$ with $\epsilon=2^{-r}$ from the derivative $h^{\prime}(s)$.
From the mean value theorem, there exists $s_1\in (\frac{1}{q-1},s_0)$ such that
\begin{align*}
h(s_0)&= h\Big{(}\frac{1}{q-1}\Big{)}+h^{\prime}(s_1)\Big{(}s_0-\frac{1}{q-1}\Big{)}\\
&\ge  h\Big{(}\frac{1}{q-1}\Big{)}+h^\prime\Big{(}\frac{1}{q-1}\Big{)} \epsilon.
\end{align*}
Hence, for any constant $c>1$ and $r\ge c\log_2 q$, we have
\begin{eqnarray*}
\min_{0<s\le 1}h(s) &=& h(s_0) \\
&\ge& h\Big{(}\frac{1}{q-1}\Big{)}-\frac{q \epsilon}{\ln q}\\
&\ge& \frac{1}{r+1}\log_q\Big{(}2^{r+1}+(q-1)\Big{(}\frac{q-2}{q-1}\Big{)}^{r+1}\Big{)}  
 -\delta\log_q\Big{(}\frac{1}{q-1}\Big{)}-\frac{q\cdot 2^{-r}}{\ln q}\\
&\ge & \log_q2+\delta+\delta\log_q\Big{(}1-\frac{1}{q}\Big{)}-\frac{1}{q^{c-1}\ln q}\\
&\ge & \frac{1}{r+1}+ \frac{r}{r+1}\frac{1}{A(q)}+ \delta,
\end{eqnarray*}
provided that $q$ is sufficiently large.
\end{proof}

\section{Asymptotic bound of LRCs over prime finite fields}\label{sec: 4}
In the above section, we provide an explicit asymptotic construction of locally repairable codes which depends on the use of local expansions at a rational place and a parity-check matrix formed by some coefficients of local expansions over non-prime finite fields. 
Instead of using only rational places of a global function field in Section \ref{sec: 3}, we will also employ places of function fields with high degrees in this section. 
It turns out that this method can be generalized to construct asymptotic locally repairable codes over prime finite fields.

Let $F/\F_q$ be a global function field of genus $\g$ and let $\Pin$ be a rational place of $F$. 
Let $\{f_1,f_2,\dots,f_\g\}$ be an $\F_q$-basis of $\mL((2\g-1)\Pin)$.
First, assume that $q$ is an odd prime and $r$ is an even integer. Let $e$ be an even divisor of $r$. 
Let $D_i$ be a divisor of degree $r$ for $1\le i\le m$. Each $D_i=\sum_{u=1}^{r/e}Q_{i,u}$ is a sum of effective divisors $Q_{i,u}$ with degree $e$.
 Each $Q_{i,u}$ is a sum of distinct places whose degrees are divisors of  $e$.  
 Furthermore, the support of $Q_{i,u}$ are pairwise disjoint for $1\le i\le m$ and $1\le u\le r/e$. 
We extend $\{f_1,f_2,\dots,f_\g\}$ to an $\F_q$-basis $\{f_1,f_2,\dots,f_\g,g_{i,(u-1)e+1},\dots,g_{i, ue}\}$ of $\mL((2\g-1)\Pin+Q_{i,u})$ for each $1\le u\le r/e$. 
Thus, $\{f_1,f_2,\dots,f_\g,g_{i,1},g_{i,2},\dots,g_{i, r}\}$ is a basis of $\mL((2\g-1)\Pin+\sum_{u=1}^{r/e} Q_{i,u})$.
Here we assume that $g_{ij}$ are chosen from $\mL((2\g-1)\Pin+P_{i,j})$ for some place $P_{i,j}$ for each $1\le i\le m$ and $1\le j\le r$.
If we put $f_{ij}=g_{ij}-\sum_{w=1}^\g\Ga_{wij}f_w$ in the same way as in Section \ref{sec: 3}, then the local expansions of $f_{ij}$ are given in the form 
$$f_{ij}=\pi^{-2\g+1}\sum_{p=\g}^{\infty}a_{pij}\pi^p.$$

\begin{lemma}\label{lem: 4.1} 
Similarly, one has the following facts:
\begin{itemize}
\item[{\rm (i)}] $f_{ij}\in\mL((2\g-1)\Pin+D_i)$ and $\nu_{P_{ij}}(f_{ij})=-1$ for all $1\le i\le m$ and $1\le j\le r$.
\item[{\rm (ii)}] For any place $P\neq P_{\infty}$ and $ P_{i,j}$, we have $\nu_P(f_{i,j})\ge 0$. 
\item[{\rm (iii)}] $\{f_{ij}: 1\le i\le m, 1\le j\le r\}$ are linearly independent over $\F_q$.
\end{itemize}
\end{lemma}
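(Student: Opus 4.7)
\medskip

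\noindent\textbf{Proof proposal for Lemma \ref{lem: 4.1}.} The plan is to mimic the three-part argument used for Lemma \ref{lem: 3.2} almost verbatim, with the only adjustment being that the ``pole places'' $P_{ij}$ are now allowed to have degree larger than $1$ and are packaged inside the higher-degree effective divisors $Q_{i,u}$ that make up $D_i$. Throughout I will use the fact that $P_{ij}$ lies in the support of exactly one $Q_{i,u}$, and that the supports of the $Q_{i,u}$ (hence of the $P_{ij}$) are pairwise disjoint, both within a fixed $i$ and across different $i$'s, by the hypothesis stated just before the lemma.

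For part (i), since $f_{ij} = g_{ij} - \sum_{w=1}^{\g}\Ga_{wij}f_w$ with $g_{ij}\in\mL((2\g-1)\Pin+P_{ij})\subseteq \mL((2\g-1)\Pin+D_i)$ and each $f_w\in \mL((2\g-1)\Pin)\subseteq \mL((2\g-1)\Pin+D_i)$, the containment is immediate. For the valuation at $P_{ij}$, we have $\nu_{P_{ij}}(g_{ij})=-1$ by construction, while $\nu_{P_{ij}}(\sum_w\Ga_{wij}f_w)\ge 0$ because each $f_w$ is pole-free away from $\Pin$ and $P_{ij}\neq \Pin$. The strict triangle inequality \cite[Lemma 1.1.11]{St09} then gives $\nu_{P_{ij}}(f_{ij})=-1$.

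For part (ii), let $P\neq \Pin$ and $P\neq P_{ij}$. Since $g_{ij}\in\mL((2\g-1)\Pin+P_{ij})$, the only possible poles of $g_{ij}$ are $\Pin$ and $P_{ij}$, so $\nu_P(g_{ij})\ge 0$. Likewise $\nu_P(f_w)\ge 0$ for all $w$. Hence $\nu_P(f_{ij})\ge 0$.

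For part (iii), suppose $\sum_{i,j}\Gl_{ij}f_{ij}=0$ with not all $\Gl_{ij}$ zero, and pick $(i_0,j_0)$ with $\Gl_{i_0,j_0}\neq 0$. By part (i), $\nu_{P_{i_0,j_0}}(\Gl_{i_0,j_0}f_{i_0,j_0})=-1$; by part (ii) together with the disjointness of the supports of the $P_{ij}$, every other summand satisfies $\nu_{P_{i_0,j_0}}(\Gl_{ij}f_{ij})\ge 0$. The strict triangle inequality then forces $\nu_{P_{i_0,j_0}}$ of the full sum to equal $-1$, contradicting that the sum is zero.

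The only real obstacle I anticipate is a bookkeeping subtlety rather than a genuine mathematical difficulty: one must verify that the chosen $P_{ij}$ (for varying $i,j$) are pairwise distinct places and are distinct from $\Pin$, so that the disjointness of supports needed in (iii) is actually available. This is ensured by the assumption that the supports of the $Q_{i,u}$ are pairwise disjoint and that $\Pin$ is not among them, combined with the convention that the $g_{ij}$ with $j\in\{(u-1)e+1,\ldots,ue\}$ are drawn from $\mL((2\g-1)\Pin+Q_{i,u})$ so that their pole places $P_{ij}$ lie in the support of $Q_{i,u}$. With that observation in hand, the rest of the argument is a direct translation of the rational-place proof in Lemma \ref{lem: 3.2}.
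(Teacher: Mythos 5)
Your parts (i) and (ii) are fine, and your overall plan coincides with the paper's (the paper literally proves Lemma \ref{lem: 4.1} by saying it is the same as Lemma \ref{lem: 3.2}). The genuine problem is in part (iii), and in your closing ``bookkeeping'' paragraph where you claim the places $P_{ij}$ are pairwise distinct. Disjointness of the supports of the $Q_{i,u}$ does \emph{not} give this: if $P$ is a place of degree $d>1$ in the support of some $Q_{i,u}$, then $\dim\mL((2\g-1)\Pin+P)-\dim\mL((2\g-1)\Pin)=d$, so the basis extension necessarily assigns $d$ of the functions $g_{ij}$ (hence $d$ of the $f_{ij}$) to that single place, i.e.\ $P_{ij}=P_{ij'}=P$ for $d$ distinct indices $j$. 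This is exactly the situation Section \ref{sec: 4} is built for: over a prime field one must use non-rational places (compare the count $\sum_{d\mid e}d\cdot B_d(F_i)=N(E_i)$ in the proof of Theorem \ref{thm: 1.4}), so repetition among the $P_{ij}$ is unavoidable. Consequently your key step fails: for $j\neq j_0$ with $P_{i_0 j}=P_{i_0 j_0}$ the summand $\Gl_{i_0 j}f_{i_0 j}$ also has valuation $-1$ at $P_{i_0 j_0}$, and an $\F_q$-linear combination of several functions with a simple pole at the same higher-degree place may well be pole-free there, so the strict triangle inequality no longer produces the contradiction.

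The lemma is nevertheless true, and the repair is short: argue place by place rather than index by index. Suppose $\sum_{i,j}\Gl_{ij}f_{ij}=0$ nontrivially and choose a place $P$ with $\Gl_{ij}\neq 0$ for some $(i,j)$ with $P_{ij}=P$; by disjointness of the supports all such pairs share the same $i=i_0$. Since $\{f_1,\dots,f_\g,g_{i_0,1},\dots,g_{i_0,r}\}$ is an $\F_q$-basis of $\mL((2\g-1)\Pin+D_{i_0})$, the function $h=\sum_{j:\,P_{i_0 j}=P}\Gl_{i_0 j}f_{i_0 j}$ does not lie in $\mL((2\g-1)\Pin)$, while $h\in\mL((2\g-1)\Pin+P)$; every element of $\mL((2\g-1)\Pin+P)\setminus\mL((2\g-1)\Pin)$ has $\nu_P$ exactly $-1$. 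All remaining summands have $\nu_P\ge 0$ by (ii), so the total sum would have valuation $-1$ at $P$, contradicting that it is zero. (The same grouping is what one needs again in the proof of Proposition \ref{prop: 4.2}.) Your argument for (i) is unaffected: $\nu_{P_{ij}}(g_{ij})=-1$ indeed holds, since $g_{ij}\in\mL((2\g-1)\Pin+P_{ij})$ but $g_{ij}\notin\mL((2\g-1)\Pin)$.
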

\begin{proof} 
The proof is the same as Lemma \ref{lem: 3.2}. We omit the details. 
\end{proof}

Let $t$ be a non-negative integer. 
For each $1\le i\le m$, let $1\neq \alpha_i\in \F_q^*$, $f_{i,r+1}=\alpha_i f_{i1}$ and define the matrix
\begin{equation}\label{eq:D}
H_i=\left(\begin{array}{ccccc}
a_{\g, i,1}& a_{\g, i,2}& \cdots& a_{\g, i,r}& \alpha_i a_{\g, i,1}\\
a_{\g+1, i,1}& a_{\g+1, i,2}& \cdots& a_{\g+1, i,r}& \alpha_i a_{\g+1, i,1}\\
\vdots&\vdots & \vdots& \vdots&\vdots\\
a_{2\g-1+te,i,1} &a_{2\g-1+te,i,2}& \cdots&a_{2\g-1+te,i,r} &  \alpha_i a_{2\g-1+te,i,1}
\end{array}\right).
\end{equation}
Furthermore, we define the matrix
\begin{equation}\label{eq:Hodd} H=\left(\begin{array}{c|c|c|c}
\bi&\bo&\cdots&\bo\\ \hline
\bo&\bi&\cdots&\bo \\ \hline
\vdots&\vdots&\ddots&\vdots \\ \hline
\bo&\bo&\cdots&\bi \\ \hline
H_1&H_2&\cdots&H_m
\end{array}
\right).
\end{equation}

\begin{prop}\label{prop: 4.2}
Let $H$ be the matrix defined in \eqref{eq:Hodd}.
Let $C$ be the linear code with $H$ as a parity-check matrix. Then $C$ is a $q$-ary $[n,k,d]$-linear code with locality $r$ and
\[n=m(r+1),\quad k\ge n-\frac n{r+1}-\g-te,\quad d\ge t+1.\]
\end{prop}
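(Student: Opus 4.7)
The plan is to follow the architecture of Proposition \ref{prop: 3.3} almost verbatim, with one bookkeeping adjustment: each place $P_{ij}$ may now have degree up to $e$ rather than $1$, so the pole-degree budget for any chosen column of $H$ grows by a factor of $e$.

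The length $n = m(r+1)$ and the locality $r$ are immediate from the block structure of the matrix in (\ref{eq:Hodd}): each block of $r+1$ columns is closed by an all-one parity row, so every coordinate in a block is recoverable from the other $r$. The parity-check matrix $H$ has $m$ all-one rows plus $\g + te$ rows from the $H_i$ blocks (indexed from $\g$ through $2\g - 1 + te$), so $k \ge n - m - \g - te = n - n/(r+1) - \g - te$, establishing the dimension bound.

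For the distance claim $d \ge t+1$, I would take any $t$ columns $\{\bh_{ij}\}_{j \in S_i}$ with $\sum_i |S_i| = t$ and suppose $\sum \Gl_{ij}\bh_{ij} = \bo$. Setting $I = \{i : |S_i| \ge 2\}$, the all-one rows force $\Gl_{ij} = 0$ for $i \notin I$ exactly as in Proposition \ref{prop: 3.3}, since a singleton block contributes a lone $\Gl_{i_0,j_0}$ to one parity coordinate. The vanishing of the bottom $\g + te$ rows then says that the local expansion of $f := \sum_{i \in I, j \in S_i} \Gl_{ij} f_{ij}$ at $\Pin$ starts no earlier than $\pi^{1+te}$, so $\nu_{\Pin}(f) \ge 1 + te$, and Lemma \ref{lem: 4.1}(i)--(ii) places $f$ in $\mL\bigl(-(1+te)\Pin + \sum_{i \in I, j \in S_i} P_{ij}\bigr)$. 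Here is the key new estimate: since each $P_{ij}$ lies in a divisor $Q_{i,u}$ of degree $e$ whose prime summands have degrees dividing $e$, one has $\deg(P_{ij}) \le e$, and therefore $\deg\bigl(\sum P_{ij}\bigr) \le te$; the Riemann-Roch space then has negative degree and $f = 0$. Finally, from $f = 0$ the same case analysis as in Proposition \ref{prop: 3.3} applies: either some $\Gl_{u,j_0} \ne 0$ for $j_0 \in S_u \setminus \{1, r+1\}$ produces a contradiction by taking $\nu_{P_{u,j_0}}$, or all such coefficients vanish and the linear relations $\Gl_{u,1} + \alpha_u \Gl_{u,r+1} = 0$ together with $\sum_{j \in S_u} \Gl_{u,j} = 0$ (from the parity row) and $\alpha_u \ne 1$ force $\Gl_{ij} = 0$ throughout.

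The only genuinely new place where the argument could go wrong is the degree accounting: one must verify that each chosen column costs at most $e$ in the Riemann-Roch budget, which is precisely ensured by the construction of $Q_{i,u}$ as degree-$e$ effective divisors. This is the reason the height parameter $te$ (not $t$) appears in both the row count of $H$ and the distance threshold, and also the reason why the asymptotic rate penalty in Theorem \ref{thm: 1.4} carries the factor $e$ multiplying $\delta$.
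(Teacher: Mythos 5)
Your proposal is correct and follows essentially the same route as the paper's proof: reduce to showing any $t$ columns of $H$ are independent, kill the singleton blocks via the all-one rows, use the vanishing of the bottom $\g+te$ rows to get $\nu_{\Pin}(f)\ge 1+te$, and then the degree count $\deg\bigl(-(1+te)\Pin+\sum Q_{ij}\bigr)\le -1$ forces $f=0$, after which the Proposition \ref{prop: 3.3} case analysis (via Lemma \ref{lem: 4.1} and $\alpha_u\neq 1$) finishes. Your key bookkeeping point---that each chosen column costs at most $e$ in the Riemann--Roch degree budget, which is exactly why $te$ appears in the row count and the distance threshold---is precisely the estimate the paper uses.
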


\begin{proof}
The dimension of $C$ is at least $n-m-\g-te$. 
It is sufficient to prove that any $t$ columns of $H$ are linearly independent over $\F_q$. 

Choose any $t$ columns $\{\bh_{ij}\}_{1\le i\le m;j\in S_i}$ of $H$, where $S_i$ are subsets of $\{1,2,\dots,r+1\}$ satisfying $\sum_{i=1}^m|S_i|=t$. 
Define the subset $I=\{1\le i\le m:\; |S_i|\ge 2\}$. Let $\{\Gl_{ij}\}_{1\le i\le m;j\in S_i}$ be the elements of $\F_q$ such that $\sum_{i=1}^m\sum_{j\in S_i}\Gl_{ij}\bh_{ij}=\bo$, i.e., 
$$\sum_{i\in I}\sum_{j\in S_i}\Gl_{ij}\bh_{ij}=-\sum_{i\not\in I}\sum_{j\in S_i}\Gl_{ij}\bh_{ij}.$$
It is easy to see that $\Gl_{ij}=0$ for all $i\in \{1,2,\dots,m\}\setminus I$ and $j\in S_i$. 
Thus, we have $$\sum_{i\in I}\sum_{j\in S_i}\Gl_{ij}\bh_{ij}=\bo.$$
This implies that the local expansion of $\sum_{i\in I}\sum_{j\in S_i}\Gl_{ij}f_{ij}$ is
\[\pi^{-2\g+1}\sum_{p=2\g+te}^{\infty}\left(\sum_{i\in I}\sum_{j\in S_i}\Gl_{ij}a_{pij}\right)\pi^p.\]
Therefore, $\nu_{\Pin}(\sum_{i\in I}\sum_{j\in S_i}\Gl_{ij}f_{ij})\ge 1+te$. 
On the other hand, $\sum_{i\in I}\sum_{j\in S_i}\Gl_{ij}f_{ij}$ belongs to the Riemman-Roch space $\mL((2\g-1)\Pin+\sum_{i\in I} \sum_{j\in S_i} Q_{ij})$.  
Hence, we have \[\sum_{i\in I}\sum_{j\in S_i}\Gl_{ij}f_{ij}\in \mL\Big{(}-(1+te)\Pin+\sum_{i\in I} \sum_{j\in S_i} Q_{ij}\Big{)}.\]
As $\deg\left(-(1+te)\Pin+\sum_{i\in I} \sum_{j\in S_i} Q_{ij}\right)= -(1+te)+\sum_{i\in I}\sum_{j\in S_i} e\le -(1+te)+te= -1$,  
we have $$\sum_{i\in I}\sum_{j\in S_i}\Gl_{ij}f_{ij}=0.$$
It is easy to show that $\Gl_{ij}=0$ for all $i\in I$ and $j\in S_i$ by mimicking the proof of Proposition \ref{prop: 3.3} from Lemma \ref{lem: 4.1}. 
\end{proof}

If $r$ is an odd integer, then $D_i$ are chosen as an effective divisor of degree $r+1$. 
If $q=2$ and $r$ is an even integer, then $D_i$ can be chosen as an effective divisor of degree $r+2$. 
Let $e$ be an even divisor of $r+1$ or $r+2$. 
We extend $\{f_1,f_2,\dots,f_\g\}$ to an $\F_q$-basis $\{f_1,f_2,\dots,f_\g,g_{i1},g_{i2},\dots,g_{i,r+1}\}$ of a subspace of $\mL((2\g-1)\Pin+D_i)$. 
If we put $f_{ij}=g_{ij}-\sum_{w=1}^\g\Ga_{wij}f_w$, then the local expansions of $f_{ij}$ are given in the form 
$$f_{ij}=\pi^{-2\g+1}\sum_{p=\g}^{\infty}a_{pij}\pi^p.$$
For a non-negative integer $t$ and each $1\le i\le m$, define the matrix
\begin{equation}\label{eq:Deven}
H_i=\left(\begin{array}{cccc}
a_{\g, i,1}& a_{\g, i,2}& \cdots& a_{\g, i,r+1}\\
a_{\g+1, i,1}& a_{\g+1, i,2}& \cdots& a_{\g+1, i,r+1}\\
\vdots& \vdots& \vdots&\vdots\\
a_{2\g-1+te,i,1}&a_{2\g-1+te,i,2}& \cdots&a_{2\g-1+te,i,r+1}
\end{array}\right).
\end{equation}
Similarly, we define the matrix
\begin{equation}\label{eq:Heven} H=\left(\begin{array}{c|c|c|c}
\bi&\bo&\cdots&\bo\\ \hline
\bo&\bi&\cdots&\bo \\ \hline
\vdots&\vdots&\ddots&\vdots \\ \hline
\bo&\bo&\cdots&\bi \\ \hline
H_1&H_2&\cdots&H_m
\end{array}
\right).
\end{equation}
Let $C$ be the linear code with $H$ as a parity-check matrix. Then $C$ is a $q$-ary $[n,k,d]$-linear code with locality $r$ and
\[n=m(r+1),\quad k\ge n-\frac n{r+1}-\g-te,\quad d\ge t+1\]
by mimicking the proof of Proposition \ref{prop: 4.2}.

Now we can provide a proof for Theorem \ref{thm: 1.4}.

{\bf The proof of Theorem \ref{thm: 1.4}:} 
\begin{proof}
Let $b$ be the integer defined as follows:
$$b=\begin{cases} r,   & \text{ if } q \mbox{ is odd and } r \mbox{ is even};\\ r+1, & \text{ if } r \mbox{ is odd};\\ r+2, &  \text{ if } q=2 \mbox{ and } r \mbox{ is even}.  \end{cases}$$
Let $e$ be an even divisor of $b$ and let $\ell = q^{\frac{e}{2}}.$ 
Let $F_1, F_2, \cdots$ be a sequence of function fields  which is recursively defined over $\F_q$ by the equation
$$y^\ell+y=\frac{x^\ell}{x^\ell+1}.$$
Consider the constant field extensions $E_i=F_i\F_{q^{e}}$. In fact, $E_1, E_2, \cdots$ are a sequence of function fields in the Garcia-Stichtenoth tower over $\F_{q^{e}}$. 
Let $\g_i:=\g(E_i)$ be the genus of $E_i$ and let $N(E_i)$ be the number of rational places of $E_i$. Then we have
$$\lim_{i\rightarrow \infty} \g_i=\infty \text{ and } \lim_{i\rightarrow \infty} \frac{N(E_i)}{\g_i}=A(q^{e})=q^{\frac{e}{2}}-1$$
from Proposition \ref{prop: 2.1}. 
Let $m_i b+1=\sum_{d|e}d\cdot B_{d}(F_i)$. Note that $B_{d}(F_i)=|\{P\in \mathbb{P}_{F_i}: \deg(P)=d\}|$.
There is a close relationship between $B_d(F_i)$ and $N(E_i)$, namely \[\sum_{d|e} d\cdot B_d(F_i)=N(E_i)\]
from the theory of constant field extensions \cite[Lemma 5.1.9]{St09}. 
From Proposition \ref{prop: 4.2}, there exists a sequence of $q$-ary $[n_i,k_i,d_i]$-linear codes with locality $r$ and
\[n_i=[m_i](r+1), k_i\ge n_i-\frac{n_i}{r+1}-\g_i-te \text{ and } d_i\ge t+1.\]
It is easy to see that 
\begin{align*}
\frac{k_i}{n_i} & \ge  \frac{r}{r+1}-\frac{\g_i}{n_i}-\frac{te}{n_i}\\
 & \ge \frac{r}{r+1}-\frac{\g_i}{[m_i](r+1)}-\frac{te}{n_i}\\
 & \ge \frac{r}{r+1}-\frac{\g_i}{N(E_i)}\cdot \frac{N(E_i)}{[m_i](r+1)}-\frac{(d_i-1)e}{n_i}\\
\end{align*}
Without loss of generality, we can assume that the following two limits exist: 
$$R=\lim_{i\rightarrow \infty} \frac{k_i}{n_i} \text{ and } \delta=\lim_{i\rightarrow \infty} \frac{d_i}{n_i}.$$
Hence, we have
$$R\ge \frac{r}{r+1} -  \frac{b}{r+1} \frac{1}{q^{\frac{e}{2}}-1} - e\delta.$$
\end{proof}

\end{document}